\documentclass{article}

\usepackage{arxiv}

\usepackage[utf8]{inputenc} 
\usepackage[T1]{fontenc}    
\usepackage{hyperref}       
\usepackage{url}            
\usepackage{booktabs}       
\usepackage{amsfonts}       
\usepackage{nicefrac}       
\usepackage{microtype}      
\usepackage{lipsum}


\usepackage{amsthm}
\usepackage{amsmath}



\theoremstyle{plain}
 \newtheorem{thm}{Theorem}[section]
 \newtheorem{lem}{Lemma}[section]
 
 \newtheorem{prop}{Proposition}[section]
\theoremstyle{definition}
  \newtheorem{defn}{Definition}[section]
  
  \newtheorem{ass}{Assumption}[section]
\theoremstyle{remark}
  \newtheorem{rem}{Remark}



\newcommand{\Cal}[1]{\ensuremath{\mathcal{#1}}}
\newcommand{\g}[1]{\mbox{\boldmath ${#1}$}}

\newcommand{\R}{\mathbb{R}}

\newcommand{\p}{\partial}
\newcommand{\bs}{\mathbb{S}}
\newcommand{\ve}{\varepsilon}



\title{Inverse $N$-body scattering 
with the time-dependent Hartree-Fock approximation}

\author{
 Michiyuki~Watanabe \\
  Faculty of Education\\
  Niigata University\\
  Niigata, Japan \\
  \texttt{mwatanab@ed.niigata-u.ac.jp} \\
}

\begin{document}
\maketitle

\begin{abstract}
We consider an inverse $N$-body scattering problem of determining 
two potentials---an external potential 
acting on all particles and a pair interaction potential---from the scattering 
particles. 
This paper finds that the time-dependent Hartree-Fock approximation for a 
three-dimensional inverse $N$-body scattering in quantum mechanics 
enables us to recover the two potentials from the scattering states with high-velocity initial states. 
The main ingredient of mathematical analysis in this paper is based on the asymptotic analysis of
the scattering operator defined in terms of a scattering solution to the Hartree-Fock equation at high energies. We show that the leading part of 
the asymptotic expansion of the scattering operator 
uniquely reconstructs the Fourier transform of the pair interaction, 
and the second term of the expansion uniquely reconstructs the $X$-ray transform of the external potential.
\end{abstract}

\keywords{Hartree-Fock approximation 
\and Inverse scattering problems 
\and Non-linear Schr\"{o}dinger equations
\and Quantum $N$-body scattering
}

\section{Introduction}
\subsection{Problem and result}
Consider the quantum $N$-body systems of identical particles interacting pairwise 
by the two-body potential under an external potential acting on all particles. 
A typical example is $N$ electrons in an atom with proton number $Z$ at the nucleus. In that case, the external potential is the 
nucleus-electron attraction, and the two-body potential is the 
electron to electron repulsion. 
Inverse $N$-body scattering problems ask to determine the interaction potential 
and the external potential from the scattering states of particles.
Such inverse problems have been extensively studied for $N$-body Schr\"{o}dinger 
equations with no external potentials (Enss and Weder \cite{Enss-Weder1995}; 
Novikov \cite{Novikov}; Wang \cite{Wang, Wang1996}; Vasy \cite{Vasy}; Uhlmann and Vasy \cite{Uhlmann-Vasy, Uhlmann-Vasy2003, Uhlmann-Vasy2004}). 
The inverse scattering for the $N$-body Schr\"{o}dinger equation in an external 
constant electric field was investigated by Valencia and Weder \cite{Valencia-Weder2012}.

Differently, Lemm and Uhlig \cite{Lemm-Uhlig2000} have investigated an 
inverse $N$-body problem by using 
Bayesian approach with the Hartree-Fock approximation. 
They gave a computationally feasible 
method of reconstructing an interaction potential from data by 
solutions to a stationary Hartree-Fock equation. 
Their work indicates that the Hartree-Fock approximation is 
also extremely useful as a way to investigate the 
inverse $N$-body problems. 

The above mentioned works have focused only on recovering interactions. 
Since $N$-body systems is generally described by a non-relativistic Hamiltonian consisting  of a one-body term with the kinetic energy and an external potential, and a two-body interaction term, 
the inverse problems of determining both the interaction potential and 
the external potential should be also investigated. 
However, little has been reported on the determination  both
the interaction potential and the external potential in the 
quantum $N$-body systems.   

In this paper, we find that the time-dependent Hartree-Fock approximation for the inverse $N$-body scattering in quantum mechanics enables us to recover two potentials---an external potential 
acting on all particles and a pair interaction potential---from the scattering states with high-velocity initial states. 
This paper also propose a new reconstruction procedure of recovering the two potentials. 

Let us formulate our inverse problem and state our main result. 
We first recall that the $n$-dimensional $N$-body Schr\"{o}dinger equation has the form: 
\begin{align*}
   & i\frac{\p}{\p t}\Psi(t)  = \widetilde{H}_N \Psi (t), \\
   & \widetilde{H}_N = \sum_{j=1}^N \left[ \frac{1}{2}\left( -i \nabla_{\mathbf{x}_j} \right)^2  +V_{ext}(\mathbf{x}_j)\right]
    + \sum_{j<k}^N V_{int}(\mathbf{x}_j - \mathbf{x}_k),
\end{align*}
where $i=\sqrt{-1}$, $\mathbf{x}_j\in \R^n$,  
$V_{ext}(\mathbf{x}_j)$ is an external potential and 
$V_{int}(\mathbf{x}_j)$ is an interaction potential with  
$V_{int}(\mathbf{x}_j)=V_{int}(-\mathbf{x}_j)$. 
The Hartree-Fock approximation is known as the simplest one-body  approximation. 
Writing the $N$-body wave function 
$\Psi(t)=\Psi (t, \mathbf{x}_1, \cdots , \mathbf{x}_N)$ 
with the Slater determinant 
\[
   \Psi (t, \mathbf{x}_1, \cdots , \mathbf{x}_N) 
    = (N!)^{-1/2} {\rm det} \left( 
     u_j (t, \mathbf{x}_k)
    \right)_{1\le j, k \le N}
\]
yields the one-body Schr\"{o}dinger equation: 
\begin{align}\label{eqn:1-1}
  i \frac{\p u_j }{\p t} &= H(u_k) u_j, \\
  H(u_k)u_j &= \left[ H_0  + V_{ext} + Q_H(x,\g{u}) \right] u_j  
   + \int_{\R^n} Q_F (x,y, \g{u}) u_j (t, y) \, dy
   \qquad \text{for $1\le j \le N$}, \notag
\end{align}
where $H_0= -\dfrac{1}{2} \Delta= -\dfrac{1}{2}\sum_{j=1}^n \frac{\p^2}{\p x_j^2}$ 
and $\g{u}=\g{u}(t,x)=( u_j(t,x) )_{1\le j\le N}$ is an unknown function in 
$(t,x) \in \R \times \R^n$, and 
 \begin{align*}
     Q_H (x, \g{u}) & =  \int_{\R^n} 
     V_{int}(x-y) \sum_{\substack{k=1 \\ k \not=j}}^N 
      | u_{k}(t, y) |^2 \, dy, \\
     &=
     V_{int}* \sum_{\substack{k=1 \\k\not=j}}^N | u_k (t, \cdot)|^2, \\
   Q_F(x,y, \g{u}) &= - V_{int}(x-y) \sum_{\substack{k=1 \\k\not=j}}^N \overline{u_k} (t, y) u_k (t, x). 
 \end{align*}
The non-linear Schr\"{o}dinger equation \eqref{eqn:1-1} 
we study in this paper 
is called the Hartree-Fock equation (HF equation). 
The terms  
$Q_H( x,\g{u} ) u_j (t,x)$  and  $\int Q_F(x,y,\g{u}) u_j (t,y) dy$ 
are called the Hartree term and the Fock term, respectively.  

Next, we introduce some notations and assumptions on the 
potentials. 
Let $W^{k,p}(\R^n)$ be the usual Sobolev space in $L^p(\R^n)$. 
We abbreviate $W^{k,2}(\R^n)$ as $H^k(\R^n)$. 
The weighted $L^2$-space is denoted as 
 \[
   L^{2,s}(\R^n) =\left\{ u(x)\, : \, (1+|x|^2)^{s/2}u(x) \in L^2(\R^n) , \, s\in \R \right\}.
 \]
Let $C_0^{\infty}(\R^n)$ be the set of compactly supported smooth functions and 
$\Cal{S}(\R^n)$ be the set of rapidly decreasing functions on $\R^n$. 
The Fourier transform is denoted as 
 \[
    \left( \Cal{F}u \right)(\xi) = \widehat{u}(\xi) = 
    \frac{1}{(2\pi)^{n/2}}\int_{\R^n} e^{-ix\cdot \xi} u(x) \, dx.
 \]
We define a function space $\Cal{S}_0(\R^n)$ as 
 \[
   \Cal{S}_0 (\R^n) = \left\{ f\in \Cal{S}(\R^n)\, ; \, \widehat{f} \in C_0^{\infty}(\R^n)\right\}.
 \]
The multiplication operator with a fixed function $V(x)$ is denoted as $V$. 
The unitary group of the self-adjoint operator $H_0$ with a domain 
$H^1(\R^n)$ is denoted as $U_0(t)$ or $e^{-itH_0}$. Then, 
solutions of the free Schr\"{o}dinger equation 
$i\p_t v = H_0 v$ with initial data $v(0)=f$ is written as 
$v(t)=U_0(t)f = e^{-itH_0}f$. 
Consider solutions to the equation \eqref{eqn:1-1} with 
$u_j(t) \longrightarrow U_0 (t) f_j^{\pm}$ 
as $t\to \pm \infty$ in some function space.  
We term the solutions scattering solution and $f_j^{\pm}$ scattering states. 
The scattering operator $S$ assigns the free state $U_0(t) f_j^-$ at 
$t=-\infty$ to the free state $U_0(t)f_j^+$ at $t=+\infty$, or equivalently 
$S: f_j^- \to f_j^+$. 
Our goal is to recover the external potential $V_{ext}(x)$ and 
the interaction potential $V_{int}(x)$ from the scattering operator $S$.

Although we consider the three-dimensional inverse problem,  
to make it easy to explain a proof of 
our theorem, we will denote the spatial 
 dimension by $n$ throughout this paper.

Let potentials satisfy the following conditions. 
\begin{ass}\label{ass:interaction-2}
 Let $n=3$. 
 We assume that the real-valued function $V_{int}(x)$ has the 
 following conditions: 
 \begin{enumerate}
     \item $V_{int}(x) \ge 0$ and 
 \begin{equation*}
     | V_{int}(x)| \le C | x |^{-2}, \qquad \text{or} \quad 
     V_{int}\in L^{n/2}(\R^n).
 \end{equation*}
 \item $\nabla V_{int} \in L^{n/2}(\R^n)$. 
 \item $V_{int}\in L^q (\R^n) + L^{\infty}(\R^n)$ with 
 $1\le q$. 
 \item   $x\cdot \nabla V_{int}\in L^{\delta} (\R^n)+ L^{\infty}(\R^n)$ with 
 $1\le \delta$. 
 \item $V_{int}(-x)=V_{int}(x)$. 
 \item $|x|^2 V_{int}(x)$ is a non-increasing function of $|x|$.
 \item $\sup_{x\in\R^n}(1+|x|)^{1+s}|V_{int}(x)|<\infty$ for $s>n/2$.
 \end{enumerate}
\end{ass}

\begin{ass}\label{ass:external}
 Let $n=3$. 
 We assume that the real-valued function $V_{ext}(x)$ has the following 
 conditions: 
 \begin{enumerate}
     \item $V_{ext}(x)\ge 0$.
     \item $V_{ext}(x)$ is a homogeneous function of degree $-\gamma$: 
     \[ V_{ext}(\alpha x) = \alpha^{-\gamma}V_{ext}(x), \qquad 
     \text{for $\alpha >0$ and  $\gamma \ge 1$}.
     \]
 \item $|x|^2 V_{ext}(x)$ is a non-increasing function of $|x|$.
 \item Zero is not an eigenvalue of the Schr\"{o}dinger operator 
 $H=H_0 + V_{ext}$.
     \item $\nabla V_{ext}\in L^{\infty}(\R^n)$ and 
           $\Delta V_{ext} \in L^n(\R^n)$. 
     \item $V_{ext}\in L^p (\R^n) + L^{\infty}(\R^n)$ with 
 $1\le p$.
 \item  
 $x\cdot \nabla V_{ext}\in L^{\delta} (\R^n) + L^{\infty}(\R^n)$ with 
 $1\le \beta$. 
     \item Let $\ell \ge 0$ be an arbitrary fixed integer.  
     For $\delta > 3n/2 +1$, $p_0 > n/2$ and multi-indices $\alpha$ with 
     $|\alpha|\le \ell$, 
     \begin{equation*}
         \sup_{x\in\R^n}(1+|x|^2)^{\delta/2} \left( \int_{|x-y|\le 1} \left| D^{\alpha} V_{ext}(y) \right|^{p_0} \, dy \right)^{1/p_0} <\infty.
     \end{equation*}
 Here we have denoted the integer part of $x$ by $[x]$ and 
 $D^{\alpha}=D_1^{\alpha_1}\cdots D_n^{\alpha_n}$, $D_j=-i\frac{\p}{\p x_j}$.
 \end{enumerate}
\end{ass}

\begin{rem} 
A proof of the unique existence theorem on the scattering solution requires 
the $L^p$-decay of solutions of the Cauchy problems 
for time-dependent Schr\"{o}dinger equations: $i\frac{\p u}{\p t}= Hu$. 
The condition $V_{ext}(x)\ge 0$ causes an absence of zero resonance for the 
Schr\"{o}dinger operator $H$. Then, under the conditions 1, 4 and 8 in Assumption \ref{ass:external},  
the $W^{k, p}(\R^n)$-continuity of wave operators for the Schr\"{o}dinger operator $H$ for any $k=0,1,\cdots, \ell$ and $1\le p \le \infty$ follows  
from the result developed by Yajima \cite{Yajima1995-2}, which implies  
 the $L^p$-decay of the solutions. 
\end{rem}

\begin{rem} The Assumption \ref{ass:external} is rather complicated condition. 
We therefore give another conditions on $V_{ext}(x)$ 
simpler than Assumption \ref{ass:external}.
\begin{ass}\label{ass:external-2}
 Let $n=3$. 
 We assume that the real-valued function $V_{ext}(x)$ has the following 
 conditions: 
 \begin{enumerate}
     \item $V_{ext}(x)\ge 0$.
     \item $V_{ext}(x)$ is a continuously differentiable function and 
     a homogeneous function of degree $-\gamma$ with $\gamma \ge 2$.
 \item Zero is not an eigenvalue of the Schr\"{o}dinger operator 
 $H=H_0 + V_{ext}$.
     \item For $|\alpha | \le 2$ and $\kappa >(3n)/2 +3$, 
      \[ 
    |D^{\alpha} V_{ext}(x) | \le \frac{C}{(1+|x|)^{\kappa}}.
 \]
 \end{enumerate}
\end{ass}
The Assumption \ref{ass:external} includes the Assumption \ref{ass:external-2}. 
Indeed, letting $r=|x|$, we have 
\[
   V+\frac{1}{2} x\cdot \nabla V_{ext} = V+\frac{1}{2}r \p_r V_{ext}
   = \frac{1}{2r} \p_r (r^2 V_{ext}).
\]
This identity means that the condition $V+\frac{1}{2}x\cdot \nabla V_{ext}\le 0$ replaces the condition 3 in Assumption \ref{ass:external}. 
Recall the Euler's homogeneous function theorem:  
if the function $V(x)$ on $\R^n$ is  continuously differentiable 
function, then $V(x)$ is a positively homogeneous of degree $\gamma$ 
if and only if  $V(x)$ satisfies $x\cdot \nabla V(x)=\gamma V(x)$. 
Then, in view of the Euler's homogeneous function theorem, 
the condition 2 in Assumption \ref{ass:external} gives $x\cdot V_{ext}=-\gamma V_{ext}\le 0$, which implies 
\[
   V_{ext}+\frac{1}{2} x \cdot \nabla V_{ext}=V_{ext}-\frac{\gamma}{2}V_{ext} 
   \le 0
\]
for $\gamma \ge 2$. 
Direct computations show that the function $V_{ext}(x)$ satisfies 
conditions 5-8 in Assumption \ref{ass:external}. 
\end{rem}
As it turns out in Section \ref{sec:2}, under the Assumption \ref{ass:interaction-2} and the Assumption \ref{ass:external}, 
there exists a unique scattering solution 
$u_j(t,x)$ of \eqref{eqn:1-1} with a condition 
$u_j(t,x)\to e^{-itH_0}\varphi_{j}$ as 
$t\to -\infty$ in $L^2(\R^n)$ 
for any $\varphi_{j}\in \Cal{S}_0$ , $j=1,\cdots , N$ sufficiently close to 
zero function. 
Put  
\begin{align}
    \psi_j(x)= (S\g{\varphi})_{j} (x) 
     & := \varphi_{j} (x) + \frac{1}{i} \int_{\R}
     e^{itH_0} P_j(x,\g{u}) \, dt, \qquad \text{$ j= 1, 2, \cdots, N$} 
     \label{eqn:1-2}\\
     P_j(x,\g{u}) 
     &=
     \left( Q_{H}(x,\g{u}) + V_{ext}(x) \right) u_j(t,x) 
     + \int_{\R^n}Q_{F}(x,y,\g{u}) u_j (t,y)\, dy, \notag
\end{align}
where $\g{u} (t,x)$ is the scattering solution to \eqref{eqn:1-1}. 
Then it will be shown that 
$u_j(x,t) \to e^{-itH_0}\psi_j(x)$ as $t\to +\infty$ 
in $L^2(\R^n)$. 
Therefore, the operator $S$ defined as \eqref{eqn:1-2} represents a scattering operator for the HF equation 
\eqref{eqn:1-1}.

The inverse problem considered in this paper is to determine the interaction and the external potentials from the scattering operator defined in terms of the scattering solution to the Hartree-Fock equation \eqref{eqn:1-1}. 
Our main result is  
\begin{thm}\label{thm:1-1}
Let $n=3$. 
Assume that $V_{int}(x)$ and $V_{ext}(x)$ satisfy
Assumption \ref{ass:interaction-2} and Assumption \ref{ass:external}, 
respectively. 
Then the potentials $V_{int}$, $V_{ext}$ are 
uniquely determined by $S$.
\end{thm}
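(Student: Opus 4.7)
The plan is to probe the scattering operator $S$ with boosted wave packets $\varphi_{j,v}(x)=e^{iv\cdot x}\phi_j(x)$, $\phi_j\in\Cal{S}_0$, and to expand $S\g{\varphi}_v-\g{\varphi}_v$ in powers of $|v|^{-1}$ as $|v|\to\infty$. I expect to read off the Fourier transform of $V_{int}$ from the $|v|^0$ coefficient, and the X-ray transform of $V_{ext}$ from the $|v|^{-1}$ coefficient; the injectivity of the Fourier and X-ray transforms on suitable classes will then finish the proof.

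First, I would carry out a Galilean reduction: write the scattering solution as $u_{j,v}(t,x)=e^{iv\cdot x-i|v|^{2}t/2}w_{j,v}(t,x-vt)$, so that the profile $w_{j,v}$ satisfies an HF-type equation in which the external potential becomes $V_{ext}(\cdot+vt)$ while the convolution structure of the Hartree and Fock terms is preserved ($V_{int}$ depends only on differences). Because $V_{ext}(\cdot+vt)$ decays in $t$ when $|v|$ is large, a contraction argument based on dispersive/Strichartz estimates for $e^{-itH_0}$ together with the $W^{k,p}$-continuity of wave operators for $H=H_0+V_{ext}$ (Yajima, under Assumption \ref{ass:external}) should give
\begin{equation*}
w_{j,v}(t)=e^{-itH_0}\phi_j+r_{j,v}(t),\qquad \sup_t \|r_{j,v}(t)\|_{*}=O(|v|^{-1}),
\end{equation*}
in a weighted Sobolev norm $\|\cdot\|_{*}$ strong enough to justify the subsequent manipulations.

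To isolate the leading term, I would substitute this expansion into \eqref{eqn:1-2}, pair against a test state $\psi_{j,v}(x)=e^{iv\cdot x}\psi_j(x)$ with $\psi_j\in\Cal{S}_0$, and change variables $x\mapsto y+vt$. The phases cancel and the Hartree contribution limits to the $v$-independent expression
\begin{equation*}
-i\int_{\R}\int_{\R^{n}}\Bigl(V_{int}*\!\!\!\sum_{k\neq j}|e^{-itH_0}\phi_k|^{2}\Bigr)(y)\,(e^{-itH_0}\phi_j)(y)\,\overline{(e^{-itH_0}\psi_j)(y)}\,dy\,dt,
\end{equation*}
with an analogous $v$-independent Fock contribution. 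Applying Plancherel and varying $\g{\phi},\g{\psi}$ over $\Cal{S}_{0}^{N}$ recovers $\widehat{V_{int}}(\eta)$ for every $\eta\in\R^{n}$, and hence $V_{int}$.

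Next, subtracting this leading term, the remaining $O(|v|^{-1})$ contribution is governed by $V_{ext}$. The change of variables $s=|v|t$ in
\begin{equation*}
-i\int_{\R}\int_{\R^{n}}V_{ext}(y+vt)(e^{-itH_0}\phi_j)(y)\overline{(e^{-itH_0}\psi_j)(y)}\,dy\,dt
\end{equation*}
yields, in the limit $|v|\to\infty$,
\begin{equation*}
-\frac{i}{|v|}\int_{\R^{n}}\phi_j(y)\overline{\psi_j(y)}\,\Bigl(\int_{\R}V_{ext}(y+s\hat v)\,ds\Bigr)\,dy\;+\;o(|v|^{-1}),
\end{equation*}
where $\hat v=v/|v|$. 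The inner bracketed integral is the X-ray transform $(RV_{ext})(y,\hat v)$; varying $\g{\phi},\g{\psi}$ and the direction $\hat v$ recovers $RV_{ext}$, and the classical injectivity of $R$ on functions decaying as in Assumption \ref{ass:external} determines $V_{ext}$.

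The hardest step is the quantitative dispersive control in the Galilean reduction. Because the HF nonlinearity couples the unknown solution to itself through both $V_{int}$ and $V_{ext}$, one must obtain a uniform-in-$v$ bound on $r_{j,v}$ sharp enough to guarantee (i) that the Hartree and Fock terms admit term-by-term expansions with controllable cross contributions, and (ii) that $V_{ext}$ does not leak into the $|v|^{0}$ coefficient, nor $V_{int}$ into the $|v|^{-1}$ coefficient. Securing this uniform control is where the full strength of the decay and regularity hypotheses on $V_{int}$ and $V_{ext}$, together with Yajima's $L^{p}$-theory for wave operators, is required.
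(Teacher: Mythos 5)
Your overall strategy is the same as the paper's: probe $S$ with boosted states, expand in powers of $|v|^{-1}$, read off $\widehat{V_{int}}$ from the $O(1)$ term and the X-ray transform of $V_{ext}$ from the $O(|v|^{-1})$ term, then invert the X-ray transform. The packaging differs (you use a Galilean reduction plus a contraction argument, the paper uses the representation \eqref{eqn:1-2} of $S$, the time-decay estimate of Proposition \ref{prop:2-1}, the Enss--Weder bound of Lemma \ref{lem:Enss-Weder} and the nonlinear wave-operator estimate of Lemma \ref{lem:2-4}), but that is a technical rearrangement, not a different route.

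The genuine gap is the sentence ``Applying Plancherel and varying $\g{\phi},\g{\psi}$ over $\Cal{S}_0^N$ recovers $\widehat{V_{int}}(\eta)$ for every $\eta$.'' After Plancherel, the $O(1)$ coefficient is not a pairing of $\widehat{V_{int}}$ against an arbitrary dense family of test functions: it is $\int_{\R^n}\widehat{V_{int}}(\xi)H(\xi)\,d\xi$ where $H$ is a time-integrated combination of $\Cal{F}\bigl(|U_0(t)\phi_k|^2\bigr)$ and $\Cal{F}\bigl(U_0(t)\psi_j\,\overline{U_0(t)\phi_j}\bigr)$; carrying out the $t$-integration produces an energy-conservation constraint (a factor $\delta\bigl((\zeta-\zeta')\cdot\xi\bigr)$ in the frequency variables), so the admissible weights $H$ form a constrained family, and the Fock contribution enters with the opposite sign and can partially cancel the Hartree one. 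Showing that this restricted family of functionals determines $\widehat{V_{int}}$ pointwise is precisely the nontrivial uniqueness step; the paper does not get it from varying test states but by introducing the scaled data $\g{\varphi}((\lambda+1)x)$, recasting \eqref{eqn:4-1} as a first-kind integral equation with a compact operator, and invoking the uniqueness theorem of the prior work \cite{Watanabe2019} (with reconstruction via the singular system/Picard criterion). Your proposal needs an actual argument here, e.g.\ a wave-packet localization handling the delta constraint and the Hartree--Fock cancellation, or the integral-equation argument of the paper. Secondly, the uniform-in-$v$ remainder bounds that you yourself flag as the hardest step --- that the nonlinear part contributes only $O(|v|^{-2})$ (so $V_{int}$ does not leak into the $|v|^{-1}$ coefficient) and that the replacement of $u_j$ by the free evolution in the $V_{ext}$ term costs only $O(|v|^{-2})$ --- are asserted, not proved; in the paper these are exactly Lemma \ref{lem:2-4} combined with Lemma \ref{lem:Enss-Weder} in \eqref{eqn:3-1}, together with the quoted expansion of $I_j^{(0)}$, and they also rely on the small-data existence of the scattering solution and the $L^\infty$ decay of Proposition \ref{prop:2-1}, which your sketch does not secure.
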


We remark that our proof gives an explicit way to reconstruct the interaction and the external  potentials from the asymptotic behavior of the function 
$<(S-I)\g{\Psi}_v, \g{\Psi}_v>_{L^2}$ at $|v|\to \infty $, where 
$\g{\Psi}_v(x)=e^{iv\cdot x}\g{\varphi}(x)$ and $<\, ,\, >_{L^2}$ is the inner product in $L^2(\R^n)$.

\subsection{Methods}
Because the high velocity limit (HVL) of the scattering operator (Enss and Weder \cite{Enss-Weder1995}) makes it possible to recover the Schr\"{o}dinger operator with 
the potentials, it has become an important tools for studying the inverse scattering problems for time-dependent Schr\"{o}dinger equations (Weder \cite{Weder1997-2}; Valencia and Weder \cite{Valencia-Weder2012}, and references therein; Adachi and Maehara \cite{Adachi-Maehara2007}; Adachi, Fujiwara and Ishida \cite{Adachi-Fujiwara-Ishida2013}; Adachi et al. \cite{Adachi.et2011}; Ishida \cite{Ishida2019}). 
According to recent researches for inverse nonlinear scattering (\cite{Watanabe2018} and \cite{Watanabe2019}), the method of the HVL also make it possible to recover nonlinearities. 
On the other hand, the small amplitude limit (SAL) of the scattering operator (Weder \cite{Weder1997,Weder1999, Weder1999_2, Weder2000_1, Weder2000_2, Weder2000_3, Weder2001_1, Weder2001_2, Weder2002, Weder2003}) make it possible to recover both the potential and nonlinearities. 
This method of the SAL, however, fails to  reconstruct the general interaction potential in the HF equation \eqref{eqn:1-1}. 
Details of the method of SAL are described briefly as follows. 

Because the HF equation \eqref{eqn:1-1} is a non-linear equation,  
our inverse problem is a non-linear inverse problem of recovering 
the linear part---zero-order coefficient--- and non-linear part. 
Such non-linear inverse scattering problem have been extensively studied by 
Weder. 
It was proved that the SAL of the scattering operator 
uniquely determines coefficients---the coefficient of the linear part and 
coefficients of the power type non-linearity. 
In other words, this method of the SAL is an asymptotic analysis 
of the scattering operator $S(\varepsilon \varphi)$ as $\varepsilon \to 0$. 
In particular, Weder \cite{Weder1997} showed that the Fr\'{e}chet derivative of the scattering operator $S$ uniquely determines the linear scattering operator for the Schr\"{o}dinger operator $H=H_0+V_{ext}$. 
Thus, the non-linear inverse scattering problem of recovering the potential $V_{ext}$ is reduce to the problem of recovering the 
Schr\"{o}dinger operator $H$ from the linear scattering operator. 

This method of Weder was applied to inverse problems for Hartree equations 
(\cite{Watanabe0,Watanabe2007-1}). 
Here, we briefly review the method to 
recover the coefficient function $V_{ext}$ of the linear term 
in the case of $N=2$ and $u_1=u_2$ to the equation \eqref{eqn:1-1}.
Following Weder \cite{Weder1997}, the scattering operator $S$ is defined in terms of wave operators $W_{\pm}=\lim_{t\to \pm\infty}e^{itH} e^{-itH_0}$ for the Schr\"{o}dinger operator $H=H_0+V_{ext}$: 
\begin{align*}
    S & = W_+^* S_N W_-, \\
    (S_N \g{\varphi})_j(x)& = 
    \varphi_j(x) +\frac{1}{i} \int_{\R} e^{itH}N_j(x, \g{u})\, dt,  \\
    N_j(x, \g{u}) &= Q_H (x, \g{u})u_j(t,x) + 
     \int_{\R^n} Q_F (x,y, \g{u})u_j(t,y)\, dy,   
\end{align*}
and the scattering operator $S$ has expansion 
\[ 
   S(\varepsilon\g{\varphi}) = \varepsilon S_{V_{ext}} \g{\varphi} + O(\varepsilon^3)
\]
as $\varepsilon \to 0$ in $H^1(\R^n)$, where 
$S_{V_{ext}}$ denotes the scattering operator for 
the Schr\"{o}dinger operator $H$.
We shall term this expansion ``$\varepsilon$-expansion''. 
This $\ve$-expansion indicates that the scattering operator 
$S$ uniquely determines 
the scattering operator $S_{V_{ext}}$. 
As is well-known (see, e.g., \cite{Enss-Weder1995}), 
the operator $S_{V_{ext}}$ uniquely determines the external potential $V_{ext}$. Then we can construct wave operators $W_+=\lim_{t\to \infty}e^{itH} e^{-itH_0}$ and $W^*_-$. Defining $S_F$ as 
$S_F = W_+ S W_-^*$, the small amplitude limit of the function 
$\frac{1}{\varepsilon^3}(S_F-I)(\varepsilon\g{\varphi})$ uniquely determines the interaction potential of the form $V_{int}(x)= \lambda |x|^{-\sigma}$ (see \cite{Watanabe2007}).

This method fails to reconstruct the general interaction potential due to the difficulty of analysis to the operator $e^{itH}$. 
In order to overcome this difficulty, we give another representation of the scattering operator \eqref{eqn:1-2} with no operator $H$. 
Recently, the general interaction potential  
is successfully reconstructed by means of the method of the HVL of the scattering operator in the case of 
the Hartree-Fock equation \eqref{eqn:1-1} with no external potential $V_{ext}$ (\cite{Watanabe2019}). 
Our representation \eqref{eqn:1-2} permits the high-velocity analysis of 
the scattering operator to reconstruct both the interaction potential and the external potential.
More precisely, analyzing the asymptotic expansion of the function 
$<(S-I)\g{\Psi}_v, \g{\Psi}_v>_{L^2}$ 
at $|v|\to \infty$ ($v$-expansion)
shows that the leading term of the expansion  uniquely determines the Fourier transform of the interaction potential $V_{int}$ and the second term of the expansion uniquely determines the 
$X$-ray transform of the external potential $V_{ext}$.  
This method of the $v$-expansion dose not require construction of 
wave operators $W_{\pm}$ to reconstruct the potentials. 
Thus the method of the $v$-expansion is simpler and less complicated 
than the method of the $\ve$-expansion. 

In order to prove that the operator defined as \eqref{eqn:1-2} is a 
scattering operator to the equation \eqref{eqn:1-1}, 
we require a time-decay $L^{\infty}$-estimate on the solution to the Cauchy problem of the equation \eqref{eqn:1-1}. 
Such estimate on the solution to the Hartree type equation 
has been extensively investigated for the case where the potential 
has the form $\lambda |x|^{-\gamma}$ for some constants $\lambda$ 
and $\gamma$ (see, e.g., Wada \cite{Wada}; Hayashi and Naumkin \cite{Hayashi-Naumkin98}; Hayashi and Ozawa \cite{Hayashi-Ozawa1988}). 
Few researchers have addressed the problem of the time-decay 
$L^{\infty}$-estimate on the solution to the Hartree-Fock equation 
with general interaction and external potentials. 
Making use of the pseudo-conformal conservation law and 
the Gagliardo-Nirenberg inequality, with the help of assumptions 
on the potentials, we give a $L^{\infty}$-estimate on the solution 
to the equation \eqref{eqn:1-1}.

This paper is organized as follows: 
Section 2 proves that the operator $S$ defined as \eqref{eqn:1-2}
 is  the scattering operator for the Hartree-Fock equation 
 \eqref{eqn:1-1}, after proving a time-decay 
 $L^{\infty}$-estimate on  the solution to the Cauchy problem 
 for the equation \eqref{eqn:1-1}. 
We give the asymptotic expansion of the scattering operator 
with the high-velocity initial states in Section 3. 
Section 4 is devoted to reconstructions of the interaction potential 
and the external potential.

\section{Representation of 
the scattering operator}
\label{sec:2}
This section shows that the scattering operator for the Hartree-Fock equation \eqref{eqn:1-1} 
has a representation \eqref{eqn:1-2}. 
We first recall that the unique existence of the scattering solution of Hartree equations with the external potential are studied in \cite{Watanabe0}. This result can be easily applicable to the Hartree-Fock equation \eqref{eqn:1-1}. Assume that potentials $V_{ext}$ and $V_{int}$ satisfy Assumption \ref{ass:interaction-2} and Assumption \ref{ass:external}, respectively. 
Then there exists $\varepsilon_0 >0$ such that 
the equation \eqref{eqn:1-1} satisfying the condition 
$u_j(t)\to e^{-itH_0}\varphi_j$ as $t\to -\infty$ in $L^2(\R^n)$ for 
$\varphi_j \in L^2(\R^n)$ with $\| \varphi_j \|_{L^2}\le \varepsilon_0$, $j=1,2, \cdots, N,$ 
has a unique solution  
\[
  u_j \in \Cal{W} = L^3(\R : L^q(\R^n)) \cap 
  L^{\infty}(\R : L^2(\R^n)), 
  \qquad \frac{1}{q}=\frac{1}{2}-\frac{2}{3n}. 
\]
Moreover, there exists a unique $\psi_j\in L^2(\R^n)$ such that $u_j(t)\to e^{-itH_0}\psi_j$ as $t\to \infty$ in $L^2(\R^n)$.

In what follows, 
because we are interested in the inverse scattering problem,  we consider the scattering for high-velocity initial states.  Let $\varphi\in \Cal{S}_0(\R^n)$. 
Then the function $\Phi_v(x)=e^{iv\cdot x}\varphi(x)$ has a compact velocity support in the momentum space around $v$, due to the fact that $\widehat{\Phi}_v (\xi)=\widehat{\varphi}(\xi-v)$.

The main result of this section is 

\begin{thm}\label{thm:2-1}
Let $n=3$. Assume that potentials $V_{int}$ and $V_{ext}$ satisfy Assumption \ref{ass:interaction-2} and Assumption \ref{ass:external}, respectively. Let $u_j$, $j=1,2, \cdots, N$, be the scattering solutions to \eqref{eqn:1-1} with 
initial scattering states $\varphi_j\in \Cal{S}_0$. Put 
\begin{align}
     \psi_j=(S\g{\varphi})_j (x) 
     & = \varphi_j (x) + \frac{1}{i} \int_{\R}
     e^{itH_0} P_j(x,\g{u}) \, dt, \label{eqn:2-1}\\
     P_j(x,\g{u}) 
     &=
     \left( Q_{H}(x,\g{u}) + V_{ext}(x) \right) u_j(t,x) 
     + \int_{\R^n}Q_{F}(x,y,\g{u}) u_j (t,y)\, dy. \notag
\end{align}
Then we have 
\begin{equation*}
    \| u_j(t) - e^{-itH_0}\psi_j \|_{L^2} \longrightarrow 0 \qquad \text{as $t\to \infty$}.
\end{equation*}
\end{thm}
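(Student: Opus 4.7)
The plan is to derive the explicit representation \eqref{eqn:2-1} by applying Duhamel's principle to \eqref{eqn:1-1} and reducing the theorem to the single quantitative claim that $\|P_j(s,\g{u})\|_{L^2}$ is integrable over $\R$ in $s$. Writing \eqref{eqn:1-1} as $i\p_t u_j = H_0 u_j + P_j(x,\g{u})$, Duhamel's formula gives, for any $t_0 < t$,
\[
 e^{itH_0} u_j(t) \;=\; e^{it_0 H_0} u_j(t_0) \;+\; \frac{1}{i}\int_{t_0}^{t} e^{isH_0}P_j(s,\g{u})\,ds.
\]
Since $u_j(t)\to e^{-itH_0}\varphi_j$ in $L^2$ as $t\to-\infty$, the boundary contribution $e^{it_0 H_0}u_j(t_0)$ converges to $\varphi_j$. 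Granting absolute integrability of the integrand, I first let $t_0\to-\infty$ and then $t\to+\infty$ to identify $\psi_j$ as in \eqref{eqn:2-1}, and the claimed convergence follows from $\|u_j(t) - e^{-itH_0}\psi_j\|_{L^2} \le \int_t^{\infty}\|P_j(s,\g{u})\|_{L^2}\,ds$, which is the tail of an integrable function.

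The problem therefore reduces to establishing
\[
 \int_{\R}\|P_j(s,\g{u})\|_{L^2}\,ds \;<\; \infty.
\]
I would treat the three summands of $P_j$ separately. For the Hartree piece $Q_H(x,\g{u})u_j$, the decomposition $V_{int}\in L^q+L^\infty$ from Assumption \ref{ass:interaction-2} together with Young's inequality bounds $\|V_{int}*|u_k|^2\|_{L^r}$ in terms of mixed $L^p$-norms of $u_k$; H\"{o}lder's inequality then pairs the result with $u_j$ in the remaining exponent. The Fock piece is controlled similarly, viewing $\int V_{int}(x-y)\overline{u_k(t,y)}u_j(t,y)\,dy$ as a Young-type kernel acting pointwise in $x$. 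The essential input in each case is the time-decay bound on the nonlinear solution announced in the introduction---namely $\|u_j(t)\|_{L^{\infty}} \le C(1+|t|)^{-n/2}$ obtained via the pseudo-conformal conservation law and the Gagliardo-Nirenberg inequality---which, interpolated against conservation of $\|u_j(t)\|_{L^2}$, yields $\|u_j(t)\|_{L^r} \le C(1+|t|)^{-n(1/2-1/r)}$ for $2\le r \le\infty$. Choosing exponents so that the sum of temporal exponents exceeds one makes each nonlinear contribution integrable on $\R$.

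The main obstacle I anticipate is the linear term $V_{ext}(x)u_j(t,x)$, which inherits the time decay of only a single copy of the solution. Here the argument must draw on the spatial decay encoded in Assumption \ref{ass:external}: splitting $V_{ext}=V_1+V_2$ with $V_1$ of compact-support type (as permitted by condition 8) and $V_2$ pointwise dominated by $C(1+|x|)^{-\delta}$ with $\delta > 3n/2+1$, one handles $V_1 u_j$ by H\"{o}lder in the same fashion as above, while $V_2 u_j$ requires the weighted estimate $\|(1+|x|)^{-\delta}u_j(t)\|_{L^2}$, whose time decay leans on the momentum localization $\widehat{\varphi}_j\in C_0^{\infty}$ built into $\Cal{S}_0(\R^n)$ together with the $L^\infty$-decay of $u_j$. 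Once this last piece is secured, dominated convergence delivers both $e^{it_0 H_0}u_j(t_0)\to\varphi_j$ and the identification of $\psi_j$ via \eqref{eqn:2-1}, completing the proof.
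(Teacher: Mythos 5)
Your skeleton---Duhamel's formula to obtain the representation \eqref{eqn:2-1} and reduction of the convergence claim to time-integrability of $\|P_j(t,\g{u})\|_{L^2}$---is exactly the paper's route (Subsection \ref{subsec:2-3}). The gaps are in how you propose to supply that integrability. First, your treatment of the external term, which you single out as the main obstacle, does not work as stated: you want a decaying weighted bound on $\|(1+|x|)^{-\delta}u_j(t)\|_{L^2}$ deduced from the momentum localization $\widehat{\varphi}_j\in C_0^{\infty}$, but compact momentum support of the \emph{initial datum} gives no spatial decay of the \emph{nonlinear} solution $u_j(t)$; the propagation estimate of Enss--Weder type available here (Lemma \ref{lem:Enss-Weder}) applies to the free evolution $U_0(t)\Phi_v$, not to $u_j$. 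The step is also unnecessary: under Assumption \ref{ass:external} one has $V_{ext}\in L^2$, so H\"older gives $\|V_{ext}u_j(t)\|_{L^2}\le \|V_{ext}\|_{L^2}\|u_j(t)\|_{L^{\infty}}\le C t^{-3/2}(\log t)^{3/4}$ for $t\ge e$, which is integrable at infinity; this one line is precisely the paper's argument, and a single copy of the solution suffices because $n/2=3/2>1$.

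Second, you assume $\|u_j(t)\|_{L^{\infty}}\le C(1+|t|)^{-n/2}$ for \emph{all} $t\in\R$ and aim at $\int_{\R}\|P_j\|_{L^2}\,dt<\infty$ by interpolation. What is actually proved (Proposition \ref{prop:2-1}, via the pseudo-conformal law and Gagliardo--Nirenberg) is a one-sided bound $Ct^{-3/2}(\log t)^{3/4}$ valid only for $t\ge e$; nothing gives decay as $t\to-\infty$ or uniform control near $t=0$, so your two-sided pointwise-decay input is unproved. The paper avoids needing it: the convergence claim requires only the forward tail $\int_t^{\infty}$, and the cubic Hartree and Fock terms are controlled over all of $\R$ by Lemma \ref{lem:2-2} combined with the membership $u_j\in L^3(\R;L^q(\R^n))$, $1/q=1/2-2/(3n)$, furnished by the existence and scattering theory---the ingredient your scheme should invoke to make the full-line integral defining $\psi_j$ converge, in place of pointwise decay on all of $\R$. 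With those two repairs (drop the weighted estimate in favor of the $\|V_{ext}\|_{L^2}\|u_j\|_{L^{\infty}}$ bound, and use the $L^3_tL^q_x$ bound for the nonlinear terms), your argument coincides with the paper's.
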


In Subsection \ref{subsec:2-1}, we prepare some lemmas to prove Theorem \ref{thm:2-1}.  
Subsection \ref{subsec:2-2} is devoted to state a $L^{\infty}$ estimate on the solution 
to the equation \eqref{eqn:1-1} and its proof. 
Theorem \ref{thm:2-1} is proved in Subsection \ref{subsec:2-3}.

\subsection{Preliminary lemmas}\label{subsec:2-1}

\begin{lem}[Gagliardo-Nirenberg inequality]\label{lem:Gagliardo}
 Let $q,r$ be any number satisfying $1\le q,r \le \infty$ and let $j,m$ be any integers 
 satisfying $0\le j <m$. Then for any $u\in W^{m,r}(\R^n) \cap L^q(\R^n)$, we have 
  \begin{equation}\label{eqn:Gagliardo}
     \sum_{|\alpha |=j} \left\| D^{\alpha} u \right\|_{L^p} \le 
     M \sum_{|\beta |=m} \left\| D^{\beta} u \right\|_{L^r}^a 
     \left\| u \right\|_{L^q}^{1-a}, 
 \end{equation}
 where $1/p= j/m + a(1/r-m/n)+(1-a)/q$ for all $a\in[j/m, 1]$ with the following exception: 
 if $m-j-(n/r)$ is a non-negative integer, then \eqref{eqn:Gagliardo} is asserted for 
 $a=j/m$, and where $M$ is a positive constant depending only on 
 $n,m,j,q,r,a$.
\end{lem}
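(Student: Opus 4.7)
The plan is to prove the Gagliardo–Nirenberg inequality by the classical Nirenberg strategy: first establish the fundamental Sobolev embedding for a single derivative in $L^1$, then bootstrap to general $L^r$ via Hölder, then cover the full range of $q, a$ by interpolation, and finally reduce higher-order derivatives to first-order ones by induction on $m$.

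First I would establish the base Sobolev inequality: for $u \in C_c^\infty(\R^n)$,
$$\|u\|_{L^{n/(n-1)}} \le \prod_{i=1}^n \|\partial_i u\|_{L^1}^{1/n}.$$
The proof uses the pointwise bound $|u(x)| \le \int_{-\infty}^{\infty} |\partial_i u|\, dt_i$ in each coordinate direction, followed by the Loomis–Whitney step (iterated Hölder on the geometric mean). Substituting $|u|^\gamma$ for a suitable $\gamma > 1$ and reapplying Hölder upgrades this to $\|u\|_{L^{r^*}} \le C \|\nabla u\|_{L^r}$ with $1/r^* = 1/r - 1/n$ whenever $1 \le r < n$. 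A standard density argument then extends the estimate from $C_c^\infty$ to $W^{1,r}$.

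Next I would handle the case $j = 0$, $m = 1$ with general $q$ and $a$. For $1 \le r < n$, Hölder interpolation between $L^{r^*}$ and $L^q$ gives
$$\|u\|_{L^p} \le \|u\|_{L^{r^*}}^{a} \|u\|_{L^q}^{1-a} \le C \|\nabla u\|_{L^r}^a \|u\|_{L^q}^{1-a}$$
for every $a$ satisfying $1/p = a(1/r - 1/n) + (1-a)/q$. The remaining values of $a$ (and the cases $r = n$, $r > n$) are handled by the scaling trick: replacing $u(x)$ by $u_\lambda(x) = u(\lambda x)$ and optimizing over $\lambda > 0$; the scaling identity $1/p = j/m + a(1/r - m/n) + (1-a)/q$ is precisely the condition that makes the optimization well-posed, so no constant depending on $\lambda$ remains.

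For general $0 \le j < m$, I would induct on $m$. The inductive step applies the $j = 0$, $m = 1$ case to $D^\alpha u$ with $|\alpha| = j$ to obtain
$$\|D^\alpha u\|_{L^p} \le C \,\|\nabla D^\alpha u\|_{L^r}^{a'}\, \|D^\alpha u\|_{L^{q'}}^{1-a'}$$
with a suitable $q'$, then feeds $\|D^\alpha u\|_{L^{q'}}$ into the inductive hypothesis applied at order $(j, m-1)$ or $(j-1, m-1)$. Summing over $|\alpha| = j$ and using Young's inequality $ab \le \varepsilon a^s + C_\varepsilon b^{s'}$ to absorb intermediate derivative norms on the right into the left yields the stated bound. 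The constraint $a \in [j/m, 1]$ drops out of the scaling identity, since $a$ measures the share of the $m$ derivatives captured by $\|D^\beta u\|_{L^r}$ versus the $L^q$ factor.

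The hard part will be the exceptional case where $m - j - n/r$ is a non-negative integer. At this borderline Sobolev exponent the generic interpolation degenerates (iterating the embedding produces a limiting $L^\infty$ estimate that fails, with a logarithmic loss if one naively pushes through), so only the endpoint $a = j/m$ is accessible. I would treat this case directly by applying the Sobolev embedding at order $m-j$ to $D^j u$ to land in a fixed Lebesgue space $L^{p_0}$, and then interpolating once with $\|u\|_{L^q}$; verifying that the scaling identity admits no other admissible $a$ for this configuration completes the statement. The rest of the argument reduces to bookkeeping on exponents and an approximation step to pass from $C_c^\infty(\R^n)$ to $W^{m,r} \cap L^q$.
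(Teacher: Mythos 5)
The paper offers no proof of this lemma at all; it simply cites Friedman's book, so your outline is being measured against the classical Nirenberg argument rather than anything in the text. Your sketch does follow that classical route (Loomis--Whitney for the $L^1$ Sobolev base case, the $|u|^\gamma$ substitution, interpolation, induction on $m$), but it contains one genuine gap: the claim that ``the remaining values of $a$ (and the cases $r=n$, $r>n$) are handled by the scaling trick.'' Scaling cannot do this. If the exponents satisfy the dimensional-balance relation, then both sides of the inequality transform by the \emph{same} power of $\lambda$ under $u\mapsto u(\lambda\cdot)$, so optimizing over $\lambda$ returns exactly the inequality you started with and produces nothing new; if they do not satisfy it, letting $\lambda\to 0$ or $\infty$ shows the inequality is false on $\R^n$. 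Scaling is a necessary-condition check on the exponents, not a mechanism for extending the estimate to new configurations. The case $r>n$ genuinely requires Morrey's inequality, and the full interior range $a\in(j/m,1)$ requires the intermediate-derivative interpolation inequality $\|D^{j}u\|_{L^{s}}\le C\|D^{m}u\|_{L^{s}}^{j/m}\|u\|_{L^{s}}^{1-j/m}$, which is proved by a separate one-dimensional integration-by-parts argument (Landau--Kolmogorov type) and then tensorized. That lemma is the technical heart of Nirenberg's proof and is absent from your sketch; without it, the inductive step ``feed $\|D^{\alpha}u\|_{L^{q'}}$ into the inductive hypothesis and absorb by Young'' does not close, because the intermediate norms you need to absorb live in Lebesgue exponents you have not yet controlled.

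A smaller but telling point: had you actually carried out the scaling computation, you would have found that the exponent relation as stated in the lemma, $1/p=j/m+a(1/r-m/n)+(1-a)/q$, is \emph{not} dimensionally consistent -- the correct relation (and the one in Friedman and in Nirenberg's original paper) has $j/n$ in place of $j/m$. The two agree only when $j=0$, which happens to be the only case the paper ever uses (for $\|v\|_{L^{\infty}}\lesssim\|v\|_{L^{2}}^{1-a}\|\Delta v\|_{L^{2}}^{a}$ and the embedding $\|v\|_{L^{2n/(n-2)}}\lesssim\|\nabla v\|_{L^{2}}$), so the slip is harmless downstream, but a proof built on the stated identity for $j\ge 1$ would fail. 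I would recommend either restricting your write-up to the $j=0$ cases actually needed, or correcting the exponent relation and supplying the one-dimensional interpolation lemma before attempting the induction on $m$.
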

The proof of Lemma \ref{lem:Gagliardo} will be found in Friedman \cite{Friedman1969}. 

\begin{lem}\label{lem:2-2}
  Let $1/q=1/2-2/(3n)$. 
  Assume that the potential $V_{int}$  satisfies Assumption \ref{ass:interaction-2}.
  Then for any $u_j\in L^{q}(\R^n)$, $j=1,2,\cdots, N$, we have 
  \begin{equation*}
      \| Q_H(\cdot, \g{u})u_j \|_{L^2}+ 
      \left\| \int_{\R^n} Q_F(\cdot,y, \g{u})u_j(y,t)\, dy \right\|_{L^2} 
      \le C \| \g{u} \|^3_{L^{q}},
  \end{equation*}
  where $C$ is a positive constant. 
\end{lem}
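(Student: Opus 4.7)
The plan is to chain one H\"older step, one convolution step (Young's inequality, or Hardy--Littlewood--Sobolev in the borderline case), and one more H\"older step, so as to reduce every factor to an $L^q$-norm of some $u_k$. I would first define $r_1$ by $1/2 = 1/r_1 + 1/q$, giving $r_1 = 3n/2$, and $b$ by the Young balance $1 + 1/r_1 = 2/n + 1/b$, giving $b = 3n/(3n-4)$. The algebraic coincidence that makes the chain close is
\begin{equation*}
   2b = \frac{6n}{3n-4} = q,
\end{equation*}
so that an $L^b$ bound on $|u_k|^2$ is exactly an $L^q$ bound squared on $u_k$.

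For the Hartree contribution, I would write $Q_H(\cdot,\g{u}) = V_{int} * \sum_{k\neq j} |u_k|^2$ and estimate
\begin{equation*}
   \| Q_H(\cdot,\g{u}) u_j \|_{L^2} \le \| Q_H(\cdot,\g{u}) \|_{L^{r_1}} \| u_j \|_{L^q}
\end{equation*}
by H\"older. Condition (1) of Assumption \ref{ass:interaction-2} puts $V_{int}$ either in $L^{n/2}(\R^n)$ or pointwise dominated by $C|x|^{-2}$; the former case yields $\|V_{int} * |u_k|^2\|_{L^{r_1}} \le C\|u_k\|_{L^q}^2$ by Young's inequality, and the latter (for $n=3$, where $|x|^{-2}$ lies in weak-$L^{3/2}$) by Hardy--Littlewood--Sobolev. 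Combining these pieces delivers the desired bound $C\|\g{u}\|_{L^q}^3$.

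For the Fock contribution, I would first rewrite the integral, pulling the $x$-factor out, as
\begin{equation*}
   \int_{\R^n} Q_F(x,y,\g{u}) u_j(t,y)\, dy = - \sum_{k\neq j} u_k(t,x) \bigl[ V_{int} * ( \overline{u_k(t,\cdot)}\, u_j(t,\cdot) ) \bigr](x),
\end{equation*}
and then apply exactly the same three-step chain: H\"older pulls the outer $u_k(t,x)$ factor into $L^q$ and leaves the bracket in $L^{r_1}$; Young or HLS handles the convolution against $V_{int}$; and a final H\"older, using $1/b = 1/q + 1/q$, handles the product $\overline{u_k}\, u_j \in L^b$. The resulting bound is again $C\|\g{u}\|_{L^q}^3$.

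The only delicate point is the weak-$L^{n/2}$ case $V_{int}(x) \sim |x|^{-2}$, where Young's inequality fails at the endpoint and must be replaced by Hardy--Littlewood--Sobolev; splitting $V_{int}$ into a singular piece bounded by $C|x|^{-2}$ near the origin and an $L^{n/2}$ tail at infinity disposes of this subtlety cleanly. Aside from this, the lemma is a mechanical verification that the exponents $(q, r_1, b)$ satisfy the H\"older and Young scaling relations and that the identity $2b = q$ really holds.
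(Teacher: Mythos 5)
Your argument is correct and is essentially the paper's own proof: the paper obtains exactly your H\"older--Young (or Hardy--Littlewood--Sobolev in the $|x|^{-2}$ case) chain, stated with exponents $2a=2b=2h=q$ and attributed to Mochizuki's Lemma 4.6, and then closes with $\alpha^2\beta \le \tfrac{2}{3}(\alpha^3+\beta^3)$, which matches your final combination step; your explicit computation $r_1=3n/2$, $b=3n/(3n-4)$, $2b=q$ is just the detailed verification behind that citation. One small quibble: in your closing aside, the tail of a potential known only to be bounded by $C|x|^{-2}$ is \emph{not} in $L^{3/2}(\R^3)$ (that would need, e.g., condition 7 of Assumption \ref{ass:interaction-2}), so the proposed splitting is unnecessary and, as stated, unjustified --- but this does not matter, since your primary route, applying Hardy--Littlewood--Sobolev directly to the weak-$L^{3/2}$ kernel $|x|^{-2}$, is exactly what the paper does.
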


\begin{proof}
 Following Mochizuki \cite[Lemma 4.6.]{Mochizuki}, we obtain  
  \begin{align*}
     \| Q_H(\cdot, \g{u})u_j \|_{L^2} 
     & \le C \sum_{k=1, k\not=j}^N \|u_k\|_{L^{2a}} \| u_k\|_{L^{2b}} \|u_j\|_{L^{2h}}, \\
     \left\| \int_{\R^n} Q_F (\cdot, y,\g{u}) u_j(y,t)\, dy \right\|_{L^2} 
     & \le C \sum_{k=1, k\not=j}^N \left\|  (V_{int}* u_j \overline{u_k})u_k \right\|_{L^2} \\
     & \le \sum_{k=1, k\not=j}^N \|u_j\|_{L^{2a}} \| \overline{u_k} \|_{L^{2b}} 
     \|u_k\|_{L^{2h}}, \\
 \end{align*}
where the positive constants $a, b, h$ satisfy 
$2a=2b=2h=q$, 
due to the H\"{o}lder's inequality,  and 
the Hardy-Littlewood-Sobolev inequality in the case where $V_{int}$ satisfies 
$|V_{int}(x)|\le C|x|^{-2}$, or the Young's inequality in the case where 
$V_{int}$ satisfies $V_{int}\in L^{n/2}(\R^n)$. 
In view of  the inequality $\alpha^2 \beta < 2/3 ( \alpha^3+\beta^3)$, one has 
\begin{align*}
     \| Q_H(\cdot, \g{u})u_j \|_{L^2}+ \left\| \int_{\R^n} Q_F(\cdot,y)u_j(y,t)\, dy \right\|_{L^2} 
      & \le C \left( \sum_{k=1, k\not=j}^N \|u_k\|_{L^{q}}^{3} +
      \|u_j\|_{L^{q}}^3 \right)
      =  C \| \g{u} \|^3_{L^{q}}.
\end{align*}
This completes the proof. 
\end{proof}

\begin{lem}\label{lem:Enss-Weder}
Let $n\ge 2$ and $s>1$ and put $\Phi_v (x)=e^{iv\cdot x}\varphi(x)$.  
Assume that 
$q$ is a compact operator from $L^2(\R^n)$ to $L^{2,s}(\R^n)$.  
Then for any $\varphi \in \Cal{S}_0$, 
there exist a positive constant
$C$ such that 
\[ \int_{-\infty}^{\infty} 
 \| q U_0(t) \Phi_v \|_{L^2} \, dt \le 
 \frac{C}{|v|}
\]
for $|v|$ large enough. 
\end{lem}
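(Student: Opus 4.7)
My plan is to exploit the explicit wave-packet structure of the free propagation. The unitary identity
\[
U_0(t)\Phi_v(x) = e^{iv\cdot x - it|v|^2/2}\,(U_0(t)\varphi)(x-tv),
\]
combined with the standard Enss propagation estimate
\[
\|F(|y| > (r+\delta)|t|)\,U_0(t)\varphi\|_{L^2} \le C_N(1+|t|)^{-N}
\]
(for any $\delta>0$ and $N\ge 1$, where $r$ is chosen so that $\mathrm{supp}\,\hat\varphi \subset \{|\xi|\le r\}$), shows that $U_0(t)\Phi_v$ is essentially concentrated in $|x-tv|\le(r+\delta)|t|$, up to a rapidly decaying tail.

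First I would treat the model case where $q$ is multiplication by the weight $w(x):=(1+|x|^2)^{-s/2}$. The change of variables $y=x-tv$ gives
\[
\|w\,U_0(t)\Phi_v\|_{L^2}^2 = \int (1+|y+tv|^2)^{-s}|U_0(t)\varphi(y)|^2\,dy,
\]
and splitting the domain at $|y|=(r+\delta)|t|$---using $|y+tv|\ge(|v|-r-\delta)|t|$ on the inner region and the propagation estimate on the outer region---leads to
\[
\|w\,U_0(t)\Phi_v\|_{L^2} \le C(1+(|v|-r-\delta)|t|)^{-s}\|\varphi\|_{L^2} + C_N(1+|t|)^{-N}.
\]
Since $s>1$, integrating over $t\in\R$ yields $\int\|w\,U_0(t)\Phi_v\|_{L^2}\,dt \le C/|v|$ for $|v|>r+\delta$.

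Next I would reduce a general compact $q:L^2\to L^{2,s}$ to this model via the factorization $q=wK$ where $K:=w^{-1}q:L^2\to L^2$ inherits compactness. Approximating $K$ in operator norm by smooth finite-rank operators $K_\epsilon=\sum_{j=1}^{N_\epsilon}\langle\cdot,\phi_j\rangle\psi_j$ with $\phi_j,\psi_j\in\mathcal{S}(\R^n)$, the finite-rank piece is controlled by
\[
\|wK_\epsilon U_0(t)\Phi_v\|_{L^2} \le \sum_{j=1}^{N_\epsilon}|\langle \phi_j,U_0(t)\Phi_v\rangle|\,\|w\psi_j\|_{L^2},
\]
and repeated integration by parts in the Fourier representation
\[
\langle\phi_j,U_0(t)\Phi_v\rangle = \int\overline{\hat\phi_j(\xi)}\,e^{-it|\xi|^2/2}\,\hat\varphi(\xi-v)\,d\xi,
\]
exploiting the lower bound $|\xi|\ge|v|-r$ on $\mathrm{supp}\,\hat\varphi(\cdot-v)$, gives $|\langle\phi_j,U_0(t)\Phi_v\rangle|\le C_M(1+|v||t|)^{-M}$ for any $M$. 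Hence $\int\|wK_\epsilon U_0(t)\Phi_v\|_{L^2}\,dt\le C_\epsilon/|v|$.

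The hard part will be controlling the remainder $w(K-K_\epsilon)$: the naive bound $\|w(K-K_\epsilon)U_0(t)\Phi_v\|_{L^2}\le \epsilon\|\varphi\|_{L^2}$ is not $L^1$-integrable in $t$. I would resolve this by combining the spectral localization $\Phi_v=P_v\Phi_v$ (with $P_v$ the spectral cut-off of $H_0$ onto frequencies $|\xi|$ near $|v|$) with the operator-norm decay $\|qP_v\|_{L^2\to L^{2,s}}\to 0$ as $|v|\to\infty$, which follows from compactness of $q$ together with the strong convergence $P_v f\to 0$ on each fixed $L^2$-vector. Partitioning the time integral at $|t|\sim 1/|v|$ and handling short times via the $\epsilon$-bound (absorbed by the short interval length $O(1/|v|)$) and long times via the wave-packet escape argument of the model case yields the required $C/|v|$ bound.
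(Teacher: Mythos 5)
There are two genuine gaps. First, in your model case the splitting radius $(r+\delta)|t|$ does not depend on $v$, and this loses the whole point of the lemma: the inner term indeed integrates to $O(|v|^{-1})$, but the outer term $C_N(1+|t|)^{-N}$ integrates to a constant independent of $v$, so your displayed estimate only gives $\int\|w\,U_0(t)\Phi_v\|_{L^2}\,dt\le C|v|^{-1}+C'$, not $C|v|^{-1}$. The repair is exactly the device used in the proof the paper cites (Lemma 2.2 of Enss--Weder \cite{Enss-Weder1995}): split at a distance proportional to $|v||t|$, say at $|y|=|v||t|/2$. On the inner region $|y+tv|\ge |v||t|/2$, so the weight contributes $(1+|v||t|/2)^{-s}$, while on the outer region the free propagation estimate applies as soon as $|v|>2(r+\delta)$ and gives $C_N(1+|v||t|)^{-N}$; both pieces are then $O(|v|^{-1})$ after the $t$-integration. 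With a $v$-independent radius no choice of $N$ rescues the outer term.

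Second, and more seriously, your treatment of the remainder $w(K-K_\epsilon)$ cannot close. For $|t|\gtrsim 1/|v|$ the only bounds you have are $\epsilon\|\varphi\|_{L^2}$ or $\|qP_v\|\,\|\varphi\|_{L^2}=o(1)$, and neither is integrable over the unbounded long-time region; the ``wave-packet escape argument of the model case'' is unavailable there, because it requires the operator to be small on functions supported at distance $\sim|v||t|$ from the origin, i.e.\ a quantitative rate of decay of $\|q\,F(|x|\ge R)\|$ in $R$, and compactness only yields convergence to zero with no rate. This is not a technical loose end: in the argument of Enss and Weder the hypothesis actually exploited is precisely of short-range type, $\|q\,F(|x|\ge R)\|\le C(1+R)^{-s}$ with $s>1$ (equivalently, boundedness of $q$ from the weighted space $L^{2,-s}$ into $L^2$), and with it the single geometric splitting above finishes the proof with no finite-rank approximation at all. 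Note also that the lemma is applied in Theorem \ref{thm:3-1} with $q=V_{ext}$, a multiplication operator, which is never compact from $L^2(\R^n)$ to $L^{2,s}(\R^n)$ unless it vanishes; so an argument organized around compactness, rather than around the spatial decay of $q$ on the input side, both targets a hypothesis the intended application cannot satisfy and leaves the long-time remainder genuinely uncontrolled.
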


The proof of Lemma \ref{lem:Enss-Weder} will be found in \cite[Lemma 2.2]{Enss-Weder1995} 
and its proof.

Let $\g{u}(t)$ be the scattering solution and 
let $\Omega_-$ be the wave operator which assigns the free state $U_0(t)\g{f}^-$ 
to the interacting state $\g{u}(t)=U(t)\g{\psi}$ 
(see, e.g., Strauss \cite{Strauss1974}). 
In particular, $\Omega_- \, : \, H^1 \ni \g{f}^-\to \g{u}(0)\in H^1$.

\begin{lem}\label{lem:2-4}
Let $n= 3$ and $\g{\Phi}_v(x)=(e^{iv\cdot x}\varphi_j (x))_{1\le j\le N}$, 
$j=1,2, \cdots, N$. 
Assume that potentials $V_{int}$ and $V_{ext}$ satisfy Assumption \ref{ass:interaction-2} and Assumption \ref{ass:external}, respectively.
Then for any $\varphi_j \in \Cal{S}_0$,  
we have  
\[ \| ((\Omega_- -I)U_0(t)\g{\Phi}_v)_j \|_{L^2}=O(|v|^{-1})\]
as $|v|\to \infty$ uniformly in $t\in \R$.  
\end{lem}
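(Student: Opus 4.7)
The plan is to identify $((\Omega_- - I)U_0(t)\g{\Phi}_v)_j$ with $u_j(t) - U_0(t)\Phi_{v,j}$, where $\g{u}$ is the unique scattering solution to \eqref{eqn:1-1} with asymptotic free state $\g{\Phi}_v$ at $t=-\infty$; this identification follows from the autonomy of \eqref{eqn:1-1} together with uniqueness of the scattering solution recalled at the opening of Section \ref{sec:2}. Passing to the limit $s \to -\infty$ in the Duhamel identity on $[s,t]$, using $\|u_j(s) - U_0(s)\Phi_{v,j}\|_{L^2} \to 0$ and unitarity of $U_0$ on $L^2$, yields
\[
u_j(t) - U_0(t)\Phi_{v,j} = \frac{1}{i}\int_{-\infty}^{t} U_0(t-\tau) P_j(\cdot, \g{u}(\tau))\, d\tau,
\]
so the lemma reduces to showing $\int_{-\infty}^{\infty}\|P_j(\cdot, \g{u}(\tau))\|_{L^2}\, d\tau = O(|v|^{-1})$, a bound automatically uniform in $t$.

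I would then split $P_j = V_{ext} u_j + N_j(\g{u})$, where $N_j$ is the Hartree--Fock nonlinear part, and write $u_j(\tau) = U_0(\tau)\Phi_{v,j} + r_j(\tau)$. The decay hypotheses in Assumption \ref{ass:external} (conditions 1, 5, and especially 8 with $\delta > 3n/2+1$) make multiplication by $V_{ext}$ a compact operator from $L^2(\R^n)$ into $L^{2,s}(\R^n)$ for some $s > 1$. Applying Lemma \ref{lem:Enss-Weder} with $q = V_{ext}$ gives
\[
\int_{-\infty}^{\infty}\|V_{ext}\, U_0(\tau)\Phi_{v,j}\|_{L^2}\, d\tau \le \frac{C}{|v|},
\]
which is the source of the claimed decay. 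The residual $\int\|V_{ext} r_j\|_{L^2}\, d\tau$ is controlled in the bootstrap using the $L^3(\R; L^q)$-regularity of $\g{u}$ from the existence theorem of Section \ref{sec:2} combined with the smallness of $\|\g{\varphi}\|_{L^2}$.

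For the nonlinear piece, Lemma \ref{lem:2-2} gives $\|N_j(\g{u}(\tau))\|_{L^2}\le C\|\g{u}(\tau)\|_{L^q}^3$, so the regularity $\|\g{u}\|_{L^3 L^q}\le C\|\g{\varphi}\|_{L^2}$ yields $\int_\R\|N_j(\g{u}(\tau))\|_{L^2}\, d\tau\le C\|\g{\varphi}\|_{L^2}^3$. Expanding $N_j(U_0\g{\Phi}_v + r)$ multilinearly, the terms containing at least one factor of $r$ admit a H\"older bound by $C\|\g{\varphi}\|_{L^2}^2\, M(v)$, where $M(v) := \sup_t\|u_j(t) - U_0(t)\Phi_{v,j}\|_{L^2}$, and can be absorbed for small data. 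Combining with the linear estimate, the bootstrap inequality $M(v)\le C|v|^{-1} + C'\|\g{\varphi}\|_{L^2}^2\, M(v)$ closes to $M(v)\le C/|v|$ once $\|\g{\varphi}\|_{L^2}$ is taken sufficiently small.

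The main obstacle is the purely leading nonlinear term $N_j(U_0(\tau)\g{\Phi}_v)$: by the Galilean invariance of $U_0$, its $L^2$-norm at fixed $\tau$ is exactly $v$-independent, so a naive bound gives an $O(\|\g{\varphi}\|_{L^2}^3)$ \emph{constant} rather than an $O(|v|^{-1})$ decay. Overcoming this requires either exploiting the Slater-determinant antisymmetry $u_j(x)u_k(y) - u_k(x)u_j(y)$ that arises after combining the Hartree and Fock contributions (as in Watanabe \cite{Watanabe2019}), or absorbing this piece into the fixed-point construction of $\g{u}$ itself so that it never enters as an inhomogeneity for the difference equation.
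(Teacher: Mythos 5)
Your reduction is set up the way the paper intends: the identification $((\Omega_--I)U_0(t)\g{\Phi}_v)_j=u_j(t)-U_0(t)(\g{\Phi}_v)_j$ (via autonomy and uniqueness of the scattering solution), the Duhamel formula, and the treatment of the external-potential part through Lemma \ref{lem:Enss-Weder} are exactly the ingredients the paper itself deploys when it uses this lemma in Theorem \ref{thm:3-1}. But the paper gives no argument for Lemma \ref{lem:2-4}: it defers entirely to [Watanabe2019, Lemma 2.3], whose whole content is precisely the step you leave open, namely an $O(|v|^{-1})$ bound for the purely nonlinear contribution $\int_{\R}\|N_j(\cdot,U_0(\tau)\g{\Phi}_v)\|_{L^2}\,d\tau$ (and for the quadratic-in-remainder terms of your bootstrap). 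Since you explicitly stop at that point, the proposal is not a proof of the statement; it reproduces the easy (external-potential) half and names, but does not supply, the missing half. A secondary inaccuracy: multiplication by $V_{ext}$ is never compact from $L^2$ to $L^{2,s}$, so to invoke Lemma \ref{lem:Enss-Weder} one needs the Enss--Weder decay formulation (as in \eqref{eqn:4-2-1}) rather than the compactness you assert.

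Moreover, the two escape routes you sketch do not close the gap. The Hartree/Fock cancellation $Q_H u_j+\int Q_F u_j\equiv 0$ is exact only when all orbitals coincide; for distinct $\varphi_j$ the boost phases cancel inside both $Q_H$ and $Q_F$, because $|U_0(t)\Phi_{v,k}|^2$ and $(U_0(t)\Phi_{v,j})\overline{(U_0(t)\Phi_{v,k})}$ are just translates by $vt$ of the unboosted quantities, so antisymmetry yields no oscillatory gain in $|v|$. In fact your own observation, pushed one step further, shows the obstruction is structural rather than technical: the $V_{ext}$-free Hartree--Fock system is exactly Galilei covariant, the class $\Cal{W}$ and the asymptotic condition are boost invariant, and hence by uniqueness the scattering solution with data $\g{\Phi}_v$ is the boost of the one with data $\g{\varphi}$; consequently the nonlinear contribution to $\sup_t\|u_j(t)-U_0(t)(\g{\Phi}_v)_j\|_{L^2}$ is $|v|$-independent for generic distinct $\varphi_j$. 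So no refinement of the estimates you set up can produce the claimed decay; whatever makes the cited Lemma 2.3 of Watanabe2019 work (or however its statement is to be interpreted) is an ingredient genuinely absent from your argument, and your alternative of ``absorbing the term into the fixed point'' does not help, since it changes the reference profile and hence the quantity $\|u_j(t)-U_0(t)(\g{\Phi}_v)_j\|_{L^2}$ being estimated.
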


The proof of Lemma \ref{lem:2-4} is quite the same as \cite[Lemma 2.3]{Watanabe2019}.

\subsection{Time-decay estimate of solutions}\label{subsec:2-2}

\begin{prop}\label{prop:2-1}
 Let $n=3$ and $a=([n/2]+1)^{-1}(n/2)=3/4$. 
 Assume that potentials $V_{int}$ and $V_{ext}$ satisfy Assumption \ref{ass:interaction-2} and Assumption \ref{ass:external}, respectively. 
 Then the solutions $u_j(t)$, $j=1,2, \cdots, N$ of \eqref{eqn:1-1} with 
 initial states $\varphi_j \in \Cal{S}_0$ satisfies
 \begin{equation*}
     \| u_j (t) \|_{L^{\infty}} \le C t^{-n/2}(\log t)^{a}, 
 \end{equation*}
 for $t\ge e$, where $C>0$. 
\end{prop}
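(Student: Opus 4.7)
The plan is to combine the Gagliardo--Nirenberg inequality (Lemma \ref{lem:Gagliardo}) with a pseudo-conformal energy estimate built around the Galilean boost operator $J(t):=x+it\nabla=e^{-itH_0}xe^{itH_0}$. The key algebraic identity
\[
\nabla\bigl(e^{-i|x|^2/(2t)}u_j(t,x)\bigr)=(it)^{-1}e^{-i|x|^2/(2t)}J(t)u_j(t,x)
\]
iterates to $\|D^k(e^{-i|\cdot|^2/(2t)}u_j(t))\|_{L^2}=t^{-k}\|J(t)^k u_j(t)\|_{L^2}$. Combining this with the Gagliardo--Nirenberg inequality $\|f\|_{L^\infty}\le C\|f\|_{L^2}^{1-a}\|D^m f\|_{L^2}^{a}$ on $\R^n$, applied to $v_j(t,\cdot):=e^{-i|\cdot|^2/(2t)}u_j(t,\cdot)$ with $m=[n/2]+1=2$ and $a=(n/2)/m=3/4$, yields
\[
\|u_j(t)\|_{L^\infty}=\|v_j(t)\|_{L^\infty}\le C\,t^{-n/2}\|u_j(t)\|_{L^2}^{1/4}\|J(t)^2 u_j(t)\|_{L^2}^{3/4}.
\]
Since the HF flow preserves $\|u_j(t)\|_{L^2}=\|\varphi_j\|_{L^2}$, the proposition is reduced to showing $\|J(t)^2 u_j(t)\|_{L^2}=O(\log t)$.

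To control $J(t)^k u_j$ I would exploit the commutation $[J(t),H_0]=\nabla$, which together with \eqref{eqn:1-1} gives $i\p_t(J(t)u_j)=H_0(J(t)u_j)+J(t)P_j(\cdot,\g{u})$ and hence the energy identity
\[
\tfrac{d}{dt}\|J(t)u_j\|_{L^2}^2=2\,\ope{Im}\langle J(t)u_j,\,J(t)P_j(\cdot,\g{u})\rangle_{L^2}.
\]
The $x$-part of $J(t)$ produces a real expectation and so drops out of the imaginary part, while the $it\nabla$ part, after one integration by parts, converts the external and Hartree--Fock contributions into expressions controlled pointwise by $V_{ext}+\tfrac12 x\cdot\nabla V_{ext}$ and $V_{int}+\tfrac12 x\cdot\nabla V_{int}$. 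The monotonicity hypotheses $\p_r(r^2V_{ext})\le 0$ and $\p_r(r^2V_{int})\le 0$ (condition 3 of Assumption \ref{ass:external} and condition 6 of Assumption \ref{ass:interaction-2}) make both of these quantities non-positive, so the dangerous terms carry the favourable sign. The surviving error terms, cubic in $\g{u}$, are absorbed by Lemma \ref{lem:2-2}, the Hardy--Littlewood--Sobolev / Young inequalities under Assumption \ref{ass:interaction-2}, and the a priori bound $\g{u}\in\Cal{W}$ recalled at the start of Section \ref{sec:2}. Together these produce a differential inequality of the form $\tfrac{d}{dt}\|J(t)u_j\|_{L^2}^2\le C\,h(t)\|J(t)u_j\|_{L^2}^{2}+R(t)$ with $h,R\in L^1([1,\infty))$, so that Gronwall yields $\|J(t)u_j\|_{L^2}=O(1)$.

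Iterating the same procedure on $J(t)^2 u_j$ brings in the commutators $[J(t)^2,V_{ext}]$ and $[J(t)^2,Q_H+\int Q_F\cdot\,dy]$, which involve $\Delta V_{ext}$, $\nabla V_{ext}$, $\nabla V_{int}$ and $x\cdot\nabla V_{ext}$; this is exactly why the regularity conditions 5 and 7 of Assumption \ref{ass:external} and conditions 2 and 4 of Assumption \ref{ass:interaction-2} are imposed. The main obstacle I foresee is closing this second-order estimate: the right-hand side of the energy identity for $\|J(t)^2 u_j\|_{L^2}^2$ contains cubic-in-$\g{u}$ terms paired against $J(t)^2 u_j$ in which at least one factor has to be put in $L^\infty$, thereby coupling the bound on $\|J(t)^2 u_j\|_{L^2}$ to the bound on $\|u_j\|_{L^\infty}$ that one is trying to prove. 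I would therefore set the whole argument up as a continuity/bootstrap: assume the target decay $\|u_j(t)\|_{L^\infty}\le Ct^{-n/2}(\log t)^{3/4}$ together with $\|J(t)^2 u_j(t)\|_{L^2}\le C\log t$ on a maximal interval, verify both with a strictly smaller constant using the smallness $\|\varphi_j\|_{L^2}\le\ve_0$ supplied by the local existence theory, and conclude that the maximal interval is all of $[e,\infty)$.
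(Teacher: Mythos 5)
Your setup is equivalent to the paper's: since $\|J(t)^k u_j\|_{L^2}=\|D^k v_j(t)\|_{L^2}$ for the pseudo-conformally transformed function $v_j(t,x)=(it)^{n/2}e^{-it|x|^2/2}u_j(t,tx)$, your Gagliardo--Nirenberg reduction $\|u_j(t)\|_{L^\infty}\le Ct^{-n/2}\|u_j\|_{L^2}^{1/4}\|J(t)^2u_j\|_{L^2}^{3/4}$ is exactly the paper's \eqref{eqn:2-2-4}, and your first-order estimate $\|J(t)u_j\|_{L^2}=O(1)$ via the sign conditions on $V+\tfrac12 x\cdot\nabla V$ is precisely Lemma \ref{lem:2-2-1}, obtained there from the pseudo-conformal conservation law \eqref{eqn:2-2-5} (note that the favourable sign appears only after adding the potential-energy term $t^2G(t,\g v)$ to $\sum_j\|\nabla v_j\|_{L^2}^2$; the identity $\tfrac{d}{dt}\|J u_j\|_{L^2}^2=2\,\ope{Im}\langle Ju_j,JP_j\rangle$ alone is not sign-managed, so that bookkeeping must be done as in \eqref{eqn:2-2-5}).

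The genuine gap is at the decisive second-order step, where you abandon a direct estimate in favour of an unverified bootstrap. Your premise that the cubic terms force an $L^\infty$ factor of $u_j$ is false: after applying $\Delta$ to the transformed equation \eqref{eqn:2-2-1}, the terms in which the potential is undifferentiated drop out of the imaginary part, and every remaining term is bounded by $C(t^{-\gamma}+t^{-1})\|\Delta v_j\|_{L^2}$ using only H\"older, Young and the Sobolev/Gagliardo--Nirenberg bound $\|v_k\|_{L^{2n/(n-2)}}\le C\|\nabla v_k\|_{L^2}\le C$ from Lemma \ref{lem:2-2-1} (in your variables, $\|u_k\|_{L^{2n/(n-2)}}\le Ct^{-1}\|Ju_k\|_{L^2}$ and $\nabla(|u_k|^2)$ trades for $t^{-1}$ times a bilinear expression in $u_k,Ju_k$); the time decay comes from the homogeneity identity $(\nabla V_{ext}^t)=t^{-\gamma}\nabla V_{ext}$, $(\Delta V_{ext}^t)=t^{-\gamma}\Delta V_{ext}$ for the external terms and from $\|\nabla V_{int}^t\|_{L^{n/2}}=t^{-1}\|\nabla V_{int}\|_{L^{n/2}}$ for the Hartree and Fock terms. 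Integrating the resulting linear differential inequality gives $\|\Delta v_j\|_{L^2}\le C\log t$ outright, with no smallness and no continuity argument. By contrast, your plan (i) never identifies where integrable-in-$t$ decay of the external-potential commutator terms would come from in the untransformed variables --- without invoking homogeneity, terms such as $t^2\|(\Delta V_{ext})\,e^{-i|x|^2/(2t)}u_j\|_{L^2}$ and $t\|\nabla V_{ext}\cdot Ju_j\|_{L^2}$ are not obviously summable and can even grow --- and (ii) asserts but does not show that the bootstrap closes with a strictly smaller constant; the smallness $\|\varphi_j\|_{L^2}\le\ve_0$ is nowhere quantitatively linked to the constants in the claimed bounds $\|u_j\|_{L^\infty}\le Ct^{-3/2}(\log t)^{3/4}$ and $\|J^2u_j\|_{L^2}\le C\log t$. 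As written, the argument therefore does not establish Proposition \ref{prop:2-1}.
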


Let us prepare for proving the Proposition \ref{prop:2-1}. 
We denote $V(tx)$ by $V^t(x)$. Putting 
 \[
   v_j(t)=(it)^{n/2} e^{-it|x|^2/2} u_j(t, tx)
 \]
gives a equation (see e.g., Wada \cite{Wada})
\begin{align}
    i\frac{\p v_j}{\p t} &= -\frac{1}{2t^2}\Delta v_j + f_j (t, \g{v}), \label{eqn:2-2-1} \\
    f_j(t, \g{v})&=V^t_{ext}(x)v_j(t) + 
    \sum_{k=1}^N \left\{ (V_{int}^t * |v_k|^2)v_j - (V_{int}^t*v_j \overline{v_k})v_k \right\}. 
    \label{eqn:2-2-2}
\end{align}
We note that the $L^2$-conservation law for $v_j(t)$ holds 
(see Isozaki \cite{Isozaki1983}): 
\begin{equation}\label{eqn:2-2-3}
    \| v_j(t) \|_{L^2} = \| u_j (t) \|_{L^2} = \| u_j (0) \|_{L^2}, 
    \qquad j=1,2, \cdots , N.
\end{equation}
Thanks to the Gagliardo-Nirenberg inequality (Lemma \ref{lem:Gagliardo})
and \eqref{eqn:2-2-3}, we have
\begin{align}
    \| u_j (t) \|_{L^{\infty}} & \le 
     t^{-n/2} \| v_j(t) \|_{L^{\infty}} \notag \\ 
     & \le C t^{-n/2} \|v_j(t)\|_{L^2}^{1-a} \| \Delta v_j(t) \|_{L^2}^{a} \notag \\
     & = C t^{-n/2} \| \Delta v_j (t) \|_{L^2}^{a} \label{eqn:2-2-4}
\end{align}
for some $C>0$. 
Therefore, estimating $\| \Delta v_j (t) \|_{L^2}$ gives the proof of 
Proposition \ref{prop:2-1}. 
In order to estimate $\| \Delta v_j (t) \|_{L^2}$, we need 

\begin{lem}\label{lem:2-2-1}
Let $n=3$. Assume that potentials $V_{int}$ and $V_{ext}$ satisfy Assumption \ref{ass:interaction-2} and Assumption \ref{ass:external}, respectively.
 Then $v_j(t)$ satisfies 
  \begin{equation*}
      \sum_{j=1}^N \| \nabla v_j (t) \|_{L^2}^2 \le C
  \end{equation*}
  for some $C>0$ and for any $t>0$.
\end{lem}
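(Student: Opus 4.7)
The plan is to recognize $\|\nabla v_j(t)\|_{L^2}^2$ as a pseudo-conformal energy on the side of the original unknown $\g{u}$, and then to control it by a monotone functional. A direct change of variable in the defining relation $v_j(t,x) = (it)^{n/2} e^{-it|x|^2/2} u_j(t,tx)$ gives the identification
\[
 \|\nabla v_j(t)\|_{L^2}^2 = \|J(t) u_j(t)\|_{L^2}^2, \qquad J(t) = x + it\nabla,
\]
where $J(t)$ is the Galilean generator and commutes with $i\p_t + \tfrac{1}{2}\Delta$. Thus Lemma \ref{lem:2-2-1} reduces to bounding $\sum_j \|J(t) u_j(t)\|_{L^2}^2$ uniformly in $t>0$.

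The key step is the pseudo-conformal energy identity. I would introduce
\[
 \Cal{E}(t) = \tfrac{1}{2}\sum_{j=1}^N \|J(t) u_j(t)\|_{L^2}^2 + t^2 \sum_{j=1}^N \int_{\R^n} V_{ext}(x) |u_j(t,x)|^2\, dx + t^2 \Cal{V}_{HF}(\g{u}(t)),
\]
where $\Cal{V}_{HF}$ is the standard Hartree--Fock interaction functional associated with $V_{int}$. Applying $J(t)$ to \eqref{eqn:1-1}, using its commutativity with the free propagator and integrating by parts against the nonlinear terms, a standard virial computation should give
\[
 \Cal{E}'(t) = -t \sum_{j=1}^N \int_{\R^n} \bigl[2V_{ext}(x) + x\cdot\nabla V_{ext}(x)\bigr] |u_j(t,x)|^2\, dx - t\, \Cal{W}_{HF}(\g{u}(t)),
\]
where the kernel of $\Cal{W}_{HF}$ is $2V_{int}(x-y) + (x-y)\cdot\nabla V_{int}(x-y)$ paired with the combined direct--exchange two-body density. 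At this point the hypotheses that $|x|^2 V_{ext}$ and $|x|^2 V_{int}$ are non-increasing in $|x|$ enter crucially: they are precisely the pointwise inequalities $2V + x\cdot\nabla V \le 0$ for $V=V_{ext}, V_{int}$, so both integrals in $\Cal{E}'(t)$ enter with the favourable sign. Combined with $V_{ext}\ge 0$, $V_{int}\ge 0$ and the non-negativity of $\Cal{V}_{HF}$ (obtained after re-expressing it as a quadratic form in the one-body density matrix $\gamma(x,y) = \sum_k u_k(t,x) \overline{u_k(t,y)}$), this forces $\Cal{E}'(t) \le 0$ and hence $\tfrac12 \sum_j \|J(t) u_j(t)\|_{L^2}^2 \le \Cal{E}(1) = C$ for $t\ge 1$; the short-time range $(0,1]$ is dispatched by the local existence theory for $\Cal{S}_0$-data.

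The main obstacle will be the Fock (exchange) contribution. The Hartree (direct) pieces of $\Cal{V}_{HF}$ and of the virial $\Cal{W}_{HF}$ are manifestly non-negative when $V_{int}\ge 0$, but the exchange pieces carry the opposite sign, and the non-negativity of the total only becomes visible after re-writing both quantities as quadratic forms in $\gamma$, where Cauchy--Schwarz for $\gamma$ absorbs the minus sign. A parallel rearrangement is needed to bring the monotonicity of $|x|^2 V_{int}$ to bear on the virial kernel, so that the pointwise inequality $2V_{int}+(x-y)\cdot\nabla V_{int}\le 0$ is tested against a non-negative density. The remaining integrations by parts are legitimate thanks to the $L^{n/2}$ regularity of $V_{int}$ and $\nabla V_{int}$ and the integrability of $x\cdot\nabla V_{int}$ and $x\cdot\nabla V_{ext}$ granted by Assumptions \ref{ass:interaction-2} and \ref{ass:external}.
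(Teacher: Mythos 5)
Your argument is essentially the paper's: both rest on the pseudo-conformal conservation law (yours phrased via $J(t)=x+it\nabla$ acting on $u_j$, the paper's via $\nabla v_j$), with $V_{ext},V_{int}\ge 0$ and the monotonicity of $|x|^2V$ giving $2V+x\cdot\nabla V\le 0$, and Cauchy--Schwarz on the density matrix $\gamma$ absorbing the exchange term, exactly as in the paper's $G\ge 0$, $\Theta\le 0$ step. The only blemish is a sign slip in your displayed formula for $\Cal{E}'(t)$, which should read $\Cal{E}'(t)=+t\int(2V_{ext}+x\cdot\nabla V_{ext})|u_j|^2\,dx+t\,\Cal{W}_{HF}$ so that the hypotheses indeed force $\Cal{E}'(t)\le 0$ as you conclude.
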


\begin{proof}
 We first note that the pseudo-conformal conservation laws for $v_j(t)$ 
 (see, e.g., Cazenave \cite[Section 7.2]{Cazenave2003}) holds: 
 \begin{equation}
     \sum_{j=1}^N \left\| \nabla v_j (t) \right\|_{L^2}^2  
      + t^2 G(t, \g{v}) = \sum_{j=1}^N \left\| x u_j(0) \right\|_{L^2}^2 
      + \int_0^t s \Theta (s, \g{v})\, ds, \label{eqn:2-2-5}
 \end{equation}
where 
\begin{align*}
     G(t, \g{v}) & = \sum_{j=1}^N \int_{\R^n} \frac{1}{2} V^t_{ext}(x) |v_j(t)|^2\, dx 
      + \sum_{j,k=1}^N \frac{1}{4} \int_{\R^n} (V^t_{int}*|v_k(t)|^2)|v_j(t)|^2\,dx  \\
    & \hspace{1em}
      - \sum_{j,k=1}^N \frac{1}{4} \int_{\R^n} (V^t_{int}*v_j(t)\overline{v_k}(t))v_k(t) \overline{v_j}(t)\,dx,  \\ 
     \Theta(t, \g{v}) &=  \sum_{j=1}^N \int_{\R^n} \left( V^t_{ext}(x) +\frac{1}{2} x\cdot (\nabla V^t_{ext})(x) \right) |v_j(t)|^2\, dx \\
    & \hspace{1em}  + \sum_{j,k=1}^N  \int_{\R^n} \left( \left(V^t_{int}+\frac{1}{2}x\cdot \nabla V^t_{int} \right)*|v_k(t)|^2\right)|v_j(t)|^2\,dx  \\
    & \hspace{1em}
      - \sum_{j,k=1}^N  \int_{\R^n} \left( \left(V^t_{int}+\frac{1}{2}x\cdot \nabla V^t_{int} \right)*v_j(t)\overline{v_k}(t)\right)v_k(t) \overline{v_j}(t)\,dx.
 \end{align*}
In view of the Assumption \ref{ass:external}, Assumption \ref{ass:interaction-2} and 
the Cauchy-Schwarz inequality, we find that 
$G(t, \g{v})\ge 0$ and $\Theta (t, \g{v}) \le 0$ for $t \ge 0$. 
It therefore follows from \eqref{eqn:2-2-5} that 
\begin{equation*}
    \frac{d}{dt}  \left( \sum_{j=1}^N \| \nabla v_j(t) \|_{L^2}^2 + t^2 G(t, \g{v}) \right) 
    = t \Theta (t, \g{v}) \le 0
\end{equation*}
for $t\ge 0$, which implies that $\sum_{j=1}^N \| \nabla v_j(t) \|_{L^2}^2  \le C$.
\end{proof}

{\it Proof of Proposition \ref{prop:2-1}.} 
We are now in a position to prove Proposition \ref{prop:2-1}. 
In the proof, we abbreviate the $L^p$-norm of a function $f$ as $\| f \|_p$. 
Applying $\Delta$ to the equation \eqref{eqn:2-2-1}, one has 
\begin{equation}\label{eqn:2-2-6}
 i\frac{\p}{\p t}\Delta v_j(t) 
 = -\frac{1}{2t^2}\Delta^2 v_j(t) + \Delta f_j (t, \g{v}).
\end{equation}
Multiplying \eqref{eqn:2-2-6} by $\Delta \overline{v_j}$ and 
integrating the imaginary part over $\R^n$, we have 
\[
   \frac{1}{2} \frac{d}{dt} \left\| \Delta v_j (t) \right\|_{L^2}^2 = {\rm Im} \int_{\R^n} \Delta f_j(t,\g{v}) 
   \Delta \overline{v_j} (t) \, dx.
\]
Due to the fact that integrals 
\begin{align*}
    &{\rm Im} \int_{\R^n} V^t_{ext}(x) | \Delta v_j(t) |^2 \, dx, \quad 
    {\rm Im} \sum_{j,k=1}^N\int_{\R^n} (V^t_{int}*|v_k(t)|^2) | \Delta v_j(t) |^2 \, dx, \\
    &{\rm Im} \sum_{j,k=1}^N\int_{\R^n} 
     (V^t_{int}*v_j(t) \overline{v_k}(t)) \Delta v_k(t) 
     \Delta \overline{v_j}(t) \, dx
\end{align*}
vanish, one gets
\begin{equation*}
    \frac{1}{2} \sum_{j=1}^N \frac{d}{dt} 
    \left\| \Delta v_j (t) \right\|_{L^2}^2 = 
    \sum_{\ell=1}^6 I_{\ell}(t),
\end{equation*}
where 
\begin{align*}
    I_1(t) &= 
     {\rm Im} \sum_{j=1}^N 
     \int_{\R^n} (\Delta V^t_{ext})(x) v_j(t) \Delta \overline{v_j}(t) \, dx, \\
    I_2(t) &= 2
     {\rm Im} \sum_{j=1}^N 
     \int_{\R^n} (\nabla V^t_{ext})(x)\cdot  \nabla v_j(t) 
     \Delta \overline{v_j}(t) \, dx, \\
    I_3(t) &= 
     {\rm Im} \sum_{j,k=1}^N 
     \int_{\R^n} v_j(t) \Delta \overline{v_j}(t) 
     \left( \Delta ( V^t_{int}*|v_k(t)|^2)\right)(x) \, dx, \\
    I_4(t) &= 2 
     {\rm Im} \sum_{j,k=1}^N 
     \int_{\R^n} \Delta \overline{v_j}(t) 
     \left( \nabla v_j(t)\cdot  
     \nabla ( V^t_{int}*|v_k(t)|^2)\right)(x) \, dx, \\
    I_5(t) &= - 
     {\rm Im} \sum_{j,k=1}^N 
     \int_{\R^n} v_k(t) \Delta \overline{v_j}(t) 
     \left( \Delta  
      ( V^t_{int}*v_j(t) \overline{v_k}(t))\right)(x) \, dx, \\
    I_6(t) &= -2
     {\rm Im} \sum_{j,k=1}^N 
     \int_{\R^n} \Delta \overline{v_j}(t)  
     \left( \nabla v_k (t)\cdot  
     \nabla ( V^t_{int}*v_j(t)\overline{v_k}(t)) \right)(x) \, dx.
\end{align*}
We shall show that $|I_1|, |I_2| \le C t^{-\gamma} \sum_{j=1}^N \| \Delta v_j(t) \|_{L^2}$. 
Due to the Assumption \ref{ass:external} that 
$V_{ext}(x)$ is a homogeneous function of degree $-\gamma$, 
one has
\[
   (\nabla V^t_{ext})(x)=t^{-\gamma} (\nabla V_{ext})(x), 
   \qquad 
   (\Delta V^t_{ext}) (x) = t^{-\gamma} (\Delta V_{ext})(x).
\]
By using the Schwartz inequality, H\"{o}lder inequality, Gagliardo-Nirenberg inequality (Lemma \ref{lem:Gagliardo}) and 
Lemma \ref{lem:2-2-1}, we obtain
\begin{align*}
    | I_1 (t) | & \le 
     t^{-\gamma} \sum_{j=1}^N \| (\Delta V_{ext}) v_j(t)\|_{2} 
     \| \Delta v_j (t) \|_{2} \\
    & \le 
     t^{-\gamma} \| \Delta V_{ext} \|_n \sum_{j=1}^N 
     \| v_j(t)\|_{\frac{2n}{n-2}} \| \Delta v_j (t) \|_2 \\
    & \le C
     t^{-\gamma} \| \Delta V_{ext} \|_n \sum_{j=1}^N 
     \| \nabla v_j(t)\|_{2} \| \Delta v_j (t) \|_2 \\
    & \le C
     t^{-\gamma}  \sum_{j=1}^N 
      \| \Delta v_j (t) \|_2
\end{align*}
and 
\begin{align*}
    | I_2 (t) | & \le 
     t^{-\gamma} \| \nabla V_{ext} \|_{\infty} 
     \sum_{j=1}^N \|  \nabla v_j (t) \|_2 
     \| \Delta v_j (t) \|_{2} \\
    & \le C
     t^{-\gamma}  \sum_{j=1}^N 
      \| \Delta v_j (t) \|_2.
\end{align*}
We next claim that 
$| I_{\ell} (t) | \le C t^{-1} \sum_{j=1}^N \| \Delta v_j (t)\|_2$, $\ell = 3,4,5,6$. 
It is easy to check that 
\begin{align*}
    \int_{\R^n} v_j (t) \Delta \overline{v_j}(t) 
    \left( \Delta ( V^t_{int}*|v_k(t)|^2 )\right)(x)\, dx 
    &= 
    \int_{\R^n} v_j (t) \Delta \overline{v_j}(t) 
    \sum_{m=1}^3 ( \p_{x_m}V^t_{int}* \overline{v_k}(t) 
    \p_{x_m}v_k(t))(x)\, dx \\
    & \hspace{1em} + 
    \int_{\R^n} v_j (t) \Delta \overline{v_j}(t) 
    \sum_{m=1}^3 ( \p_{x_m}V^t_{int}* v_k(t) 
    \p_{x_m}\overline{v_k}(t))(x)\, dx.
\end{align*}
By using the Schwartz inequality, H\"{o}lder inequality, 
Young inequality and Gagliardo-Nirenberg inequality (Lemma \ref{lem:Gagliardo}), we have 
\begin{align*}
   &\left| 
     \int_{\R^n} v_j (t) \Delta \overline{v_j}(t) 
    \sum_{m=1}^3 ( \p_{x_m}V^t_{int}* \overline{v_k}(t) 
    \p_{x_m}v_k(t))(x)\, dx
    \right| 
    \le 
    \sum_{m=1}^3 \|\left\{ \p_{x_m}V^t_{int}*\overline{v_k}(t) 
    \p_{x_m}v_k (t)\right\}v_j(t) \|_2 \| \Delta \overline{v_j} (t) \|_2 \\
   & \hspace{3em} 
    \le 
    \| \Delta \overline{v_j}(t) \|_2 \| v_j(t) \|_{\frac{2n}{n-2}}
    \sum_{m=1}^3 \| \p_{x_m}V^t_{int}*\overline{v_k}(t) 
    \p_{x_m}v_k (t) \|_n \\
   & \hspace{3em} 
    \le C
    \| \Delta \overline{v_j}(t) \|_2 \| v_j(t) \|_{\frac{2n}{n-2}}
    \sum_{m=1}^3 \| \p_{x_m}V^t_{int} \|_{\frac{n}{2}} \| \overline{v_k}(t) 
    \p_{x_m}v_k (t) \|_{\frac{n}{n-1}} \\
   & \hspace{3em} 
    \le C
    \| \Delta \overline{v_j}(t) \|_2 \| v_j(t) \|_{\frac{2n}{n-2}}
    \sum_{m=1}^3 \| \p_{x_m}V^t_{int} \|_{\frac{n}{2}} \| \overline{v_k}(t) \|_{\frac{2n}{n-2}}
    \| \p_{x_m}v_k (t) \|_{2} \\
  & \hspace{3em} 
    \le C t^{-1}
    \| \Delta \overline{v_j}(t) \|_2 
    \| \nabla v_j (t) \|_{2} \| \nabla \overline{v_j}(t) \|_2
    \sum_{m=1}^3 \| \p_{x_m}V_{int} \|_{\frac{n}{2}}  
    \| \p_{x_m}v_k (t) \|_2 \\
  & \hspace{3em}
    \le C t^{-1} \| \Delta v_j (t) \|_2,
\end{align*}
which implies that 
$|I_3 (t) |, |I_5 (t) | \le C t^{-1}\sum_{j=1}^N \| \Delta v_j (t) \|_2$. 
Here we have used the fact that 
$\|V^t\|_p = t^{-n/p}\|V\|_p$. 

Similarly, one has
\begin{align*}
   &\left| 
     \int_{\R^n} \Delta \overline{v_j}(t) \nabla v_j (t) \cdot  
    \nabla  \left( ( V^t_{int}* |v_k(t)|^2 )
    \right)(x) \, dx
    \right| 
    \le 
     \int_{\R^n} \left| 
      \Delta \overline{v_j}(t) \nabla v_j \cdot 
      (\nabla V^t_{int}*v_k (t) \overline{v_k}(t))(x) \right| 
      \, dx
     \\
   & \hspace{3em} 
    \le 
    \| \Delta \overline{v_j}(t) \|_2 
    \| \nabla v_j(t) \|_{2}
    \| \nabla V^t_{int}* v_k (t) \overline{v_k}(t) \|_{\infty} \\
   & \hspace{3em} 
    \le C
    \| \Delta \overline{v_j}(t) \|_2 
    \| \nabla v_j(t) \|_{2}
    \| \nabla V^t_{int}\|_{\frac{n}{2}} 
    \| v_k (t) \overline{v_k}(t) \|_{\frac{n}{n-2}} \\
   & \hspace{3em} 
    \le t^{-1}
    \| \Delta \overline{v_j}(t) \|_2 
    \| \nabla v_j(t) \|_{2}
    \| \nabla V_{int} \|_{\frac{n}{2}} 
    \| v_k (t)  \|_{\frac{2}{n-2}}^2 \\
  & \hspace{3em} 
    \le t^{-1}
    \| \Delta \overline{v_j}(t) \|_2 
    \| \nabla v_j(t) \|_{2}
    \| \nabla V_{int} \|_{\frac{n}{2}} 
    \| \nabla v_k (t)  \|_{2}^2, \\
\end{align*}
which implies that 
$|I_4 (t) |, |I_6 (t) | \le C t^{-1}\sum_{j=1}^N \| \Delta v_j (t) \|_2$.

In view of the inequalities for $|I_{\ell}(t) |$, $\ell=1,2, \cdots 6$, 
we obtain a differential inequality:
\begin{equation*}
    \frac{1}{2} \sum_{j=1}^N \frac{d}{dt} \| \Delta v_j (t) \|_2^2 \le 
    C(t^{-\gamma}+t^{-1}) \left(
    \sum_{j=1}^N \| \Delta v_j (t) \|_2^2 \right)^{1/2}
\end{equation*}
for $t>0$, where $C>0$ and $\gamma \ge 1$. 
This differential inequality implies that 
\begin{equation*}
    \sum_{j=1}^N \| \Delta v_j(t) \|_2 \le C ( \log t +1) 
    \le C \log t 
\end{equation*}
for $t\ge e$. 
Hence, from \eqref{eqn:2-2-4}, we obtain the desired estimate. 
\par
\hfill $\Box$

\subsection{Proof of Theorem \ref{thm:2-1}}\label{subsec:2-3}
Let $u_j \in \Cal{W}$ be a scattering solution to \eqref{eqn:1-1} 
with initial states $e^{-itH_0}\varphi_j$ at $t=-\infty$. 
Because the scattering solution satisfies the integral equation 
\begin{equation*}
    u_j(t)=e^{-itH_0} \varphi_{j}+\frac{1}{i} \int_{-\infty}^{t} e^{-i(t-\tau)H_0}P_{j}(x, \g{u} ), d\tau, 
\end{equation*}
we obtain 
\begin{align*}
    \left( e^{-itH_0}\psi_j\right)(x) 
    &= \left( e^{-itH_0}\varphi_j\right)(x) + \frac{1}{i} 
    \int_{\R} e^{-i(t-\tau)H_0}P_j(x,\g{u})\, d\tau \\
    &= u_j(t,x) + \frac{1}{i} \int_{t}^{\infty}e^{-i(t-\tau)H_0}P_j(x,\g{u})\, d\tau .
\end{align*}
Thanks to Proposition \ref{prop:2-1} and Lemma \ref{lem:2-2}, for $t\ge e$, 
one has 
\begin{align*}
   \| u_j(t) - e^{-itH_0}\psi_j \|_{L^2} 
   & \le 
   \int_t^{\infty} \| e^{-i(t-\tau)H_0} V_{ext}u_j(\tau) \|_{L^2} \, d\tau \\
   & \hspace{1em} +
   \int_t^{\infty} 
    \left\| 
      e^{-i(t-\tau)H_0}\left( Q_H(\cdot, u)u_j(\tau) + \int_{\R^n} Q_F(x,y,\g{u})u_j(y,\tau)\, dy \right)
    \right\|_{L^2} \, d\tau \\
    & \le \int_t^{\infty} \| V_{ext} u_j(\tau) \|_{L^2}\, d\tau +
     C\int_{t}^{\infty}\| u_j(\tau) \|_{L^{q}}^3\, d\tau \\
    & \le C_1 \| V_{ext} \|_{L^2} \int_t^{\infty} \tau^{-3/2}(\log \tau )^{3/4} \, d\tau 
     + C_2 \int_{t}^{\infty} \| u_j( \tau ) \|_{L^{q}}^3 \, d\tau \\
    & \longrightarrow 0 \qquad \text{as $t\to \infty$}
\end{align*}
for some $C_1$, $C_2>0$, 
due to the fact that $u_j \in L^3(\R ; L^{q})$ and 
\begin{equation*}
    \int_{t}^{\infty} \tau^{-3/2}(\log \tau )^{3/4}\, d\tau 
    < 2^{7/4}\int_0^{\infty}e^{-\mu} \mu^{3/4}\, d\mu 
    =2^{7/4} \Gamma\left( \frac{7}{4} \right),
\end{equation*}
where $\Gamma(s)$ is the Gamma function.
This completes the proof. 
\par
\hfill $\Box$

\section{Asymptotics of the scattering operator}
\label{sec:3}
Let $\g{\Phi}_v (x) = e^{iv\cdot x}\g{\varphi} (x)$. 
The components of the vector $\g{\Phi}_v$ is denoted as 
$(\g{\Phi}_v)_j$ $(j=1,2,\cdots, N)$.
Put $I_j(v)=i< ((S-I)\g{\Phi}_v)_j, (\g{\Phi}_v)_j>_{L^2}$. 
We consider the asymptotic behavior of the function $I_j(v)$ as $|v|\to \infty$.
The $X$-ray transform of a function $f$ is defined to be 
\[
   (Xf)(x, \theta)=\widetilde{f}(x,\theta)=\int_{-\infty}^{\infty} f(x+\theta t) \, dt,
\]
where $x\in \R^n$ and $\theta \in \bs^{n-1}$.

\begin{thm}\label{thm:3-1}
Let $ n=3$. 
Assume that potentials $V_{int}$ and $V_{ext}$ satisfy Assumption \ref{ass:interaction-2} and Assumption \ref{ass:external}, respectively. 
Then for $|v|$ sufficiently large and for any $\varphi_j\in \Cal{S}_0$, 
$j=1,2,\cdots, N$, 
the function $I_j(v)$ has the expansion 
\begin{equation*}
     I_j(v) =
      \int_{\R^n} \widehat{V_{int}}(\xi) H_j(\xi) \, d\xi 
       + 
      \frac{1}{|v|} \left< 
       \widetilde{V_{ext}}(\cdot, \widehat{v}) 
      \varphi_j, \varphi_j\right>_{L^2} + 
     O(|v|^{-2})
\end{equation*}
as $|v|\to \infty$, where $\widehat{v}=v/|v|\in \bs^{n-1}$ and 
\begin{align*}
   H_j(\xi) &= \sum_{k=1}^N 
                \int_{\R}  \Cal{F} \left( |U_0 (t) \varphi_k \right|^2 ) (\xi) 
                \overline{\Cal{F} \left( |U_0 (t) \varphi_j \right|^2 ) (\xi)} \, dt \\
             &\hspace{2em}- \sum_{k=1}^N \int_{\R}  \left| 
                 \Cal{F} \left(  \left(U_0(t) \varphi_j \right) 
                \overline{ \left( U_0(t) \varphi_k \right) } \right) (\xi) \right|^2
                 \, dt. 
\end{align*}
\end{thm}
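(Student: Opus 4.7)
The plan is to unfold $I_j(v)$ directly from its definition using the scattering-operator representation \eqref{eqn:2-1}. The unitarity of $U_0(t)=e^{-itH_0}$ converts $I_j(v)$ into
\[
  I_j(v) = \int_{\R} \bigl\langle P_j(\cdot, \g{u}(t)),\, U_0(t)(\g{\Phi}_v)_j \bigr\rangle_{L^2}\, dt,
\]
which I would then split according to the three pieces of $P_j$: the linear external-potential term $V_{ext}u_j$, the Hartree term $Q_H(\cdot, \g{u})u_j$, and the Fock term $\int Q_F(\cdot, y, \g{u})u_j(t,y)\,dy$. The overall strategy is to extract the two leading powers of $|v|$ by replacing the scattering solution with the free evolution and analysing each contribution in the limit $|v|\to\infty$.

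The key replacement tool is Lemma \ref{lem:2-4}: $\|u_k(t) - U_0(t)(\g{\Phi}_v)_k\|_{L^2} = O(|v|^{-1})$ uniformly in $t$. Whenever a single $u_k$ inside $P_j$ is replaced by its free evolution, the residual may be paired via Cauchy--Schwarz with a time-integrated free-evolution factor to which Lemma \ref{lem:Enss-Weder} applies (with $q$ a multiplication by $V_{ext}$, or the $V_{int}$-type operator that arises on the Hartree--Fock cubic term after estimating as in Lemma \ref{lem:2-2}); this yields a further $O(|v|^{-1})$ after integration in $t$, and the two together give an $O(|v|^{-2})$ error that is absorbed in the remainder. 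It therefore suffices to analyse the ``frozen'' integrals in which every $u_k$ has been replaced by $U_0(t)(\g{\Phi}_v)_k$.

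For the Hartree and Fock frozen contributions I would use the Galilean identity $U_0(t)(e^{iv\cdot x}\varphi_k)(x) = e^{iv\cdot x - it|v|^2/2}(U_0(t)\varphi_k)(x-vt)$. The accompanying phase factors cancel inside both $|U_0(t)(\g{\Phi}_v)_k|^2$ and $U_0(t)(\g{\Phi}_v)_j\overline{U_0(t)(\g{\Phi}_v)_k}$, so a subsequent translation $x\mapsto x+vt$ removes all remaining $v$-dependence. Parseval's identity together with the convolution theorem then rewrites the result as $\int \widehat{V_{int}}(\xi)\,H_j(\xi)\,d\xi$, with the Hartree piece contributing the positive Fourier product in $H_j$ and the Fock piece contributing the negative squared modulus (the $k=j$ summand cancelling automatically between the two). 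For the $V_{ext}$ frozen contribution the same translation and the rescaling $s=|v|t$ yield
\[
  \int_{\R}\! dt\!\int V_{ext}(x)\,|U_0(t)(\g{\Phi}_v)_j(x)|^2\,dx = \frac{1}{|v|}\!\int_{\R}\!ds\!\int V_{ext}(y+s\widehat{v})\,|U_0(s/|v|)\varphi_j(y)|^2\,dy,
\]
and sending $|v|\to\infty$ inside the integrals produces the leading X-ray transform term $\tfrac{1}{|v|}\langle\widetilde{V_{ext}}(\cdot,\widehat{v})\varphi_j,\varphi_j\rangle_{L^2}$.

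The step I expect to be hardest is establishing the $O(|v|^{-2})$ remainder in this last expansion with a quantitative rate. A naive Taylor bound $|U_0(s/|v|)\varphi_j|^2 - |\varphi_j|^2 = O(s/|v|)$ is not enough, because multiplying by $|V_{ext}(y+s\widehat{v})|$ and integrating in $s$ requires polynomial decay that is not obvious from the mere homogeneity of $V_{ext}$. The way around this is to use that $\varphi_j\in\Cal{S}_0$ has compactly supported Fourier transform, so that $U_0(s/|v|)\varphi_j = \Cal{F}^{-1}\bigl(e^{-is|\xi|^2/(2|v|)}\widehat{\varphi_j}\bigr)$ with $|\xi|$ restricted to a bounded set. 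A first-order expansion of $e^{-is|\xi|^2/(2|v|)}-1$ on that compact set reduces the remainder to an $s$-moment of $V_{ext}$ along the ray $y+s\widehat{v}$, which is finite thanks to the rapid decay of $V_{ext}$ provided by item~8 of Assumption \ref{ass:external}.
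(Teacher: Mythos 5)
Your proposal follows essentially the same route as the paper: the same splitting of $I_j(v)$ into the nonlinear ($V_{int}$) contribution and the external-potential contribution, the same replacement of the scattering solution by the free evolution controlled by Lemma~\ref{lem:2-4} paired via Cauchy--Schwarz with the propagation estimate of Lemma~\ref{lem:Enss-Weder} to produce the $O(|v|^{-2})$ errors, and the same Galilean-translation/rescaling computation identifying the frozen terms with $\int\widehat{V_{int}}(\xi)H_j(\xi)\,d\xi$ and $\frac{1}{|v|}\langle\widetilde{V_{ext}}(\cdot,\widehat{v})\varphi_j,\varphi_j\rangle_{L^2}$. The only difference is presentational: the paper disposes of the nonlinear term $I_j^{(0)}(v)$ and of the X-ray expansion of the frozen external term by citing \cite{Watanabe2019} and \cite{Enss-Weder1995}, whereas you sketch those computations (including the cancellation of the $k=j$ summand and the remainder bound exploiting the compact Fourier support of $\varphi_j$ and the decay in Assumption~\ref{ass:external}) directly.
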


\begin{proof}
In view of the representation of the scattering operator \eqref{eqn:1-2}, we break the function $I_j(v)$ in two parts: 
\begin{equation*}
    I_j(v) = I_j^{(0)}(v) +I^{(1)}_j(v),
\end{equation*}
where 
\begin{align*}
    I_j^{(0)}(v) &= \int_{\R} 
    \left< N_j(\cdot,\g{u}), U_0(s) (\g{\Phi}_v)_j \right>_{L^2}\, ds, \\
    I_j^{(1)}(v) &= \int_{\R} 
    \left< V_{ext} u_j(s), U_0(s) (\g{\Phi}_v)_j \right>_{L^2}\, ds.
\end{align*}
We know (see \cite[subsection 2.2]{Watanabe2019}) that the function 
$I^{(0)}_j(v)$ can be expanded as 
\begin{equation*}
    I_j^{(0)}(v)=\int_{\R^n} \widehat{V_{int}} H_j(\xi)\, d\xi + R_1(v)
\end{equation*}
with the estimate $|R_1(v)|\le C |v|^{-2}$ for some $C>0$ and 
for $|v|$ sufficient large.

We will claim that 
\begin{equation*}
    I_j^{(1)}(v) = \frac{1}{|v|} 
    \left<
     \widetilde{V_{ext}}(\cdot, \widehat{v})\,  
     \varphi_j, \varphi_j
    \right>_{L^2} + O(|v|^{-2})
\end{equation*}
as $|v|\to \infty$. 
Let $\Omega_-$ be a wave operator. Then we have 
\begin{align}
    I_j^{(1)}(v) &= 
     \left<
      \int_{\R} U_0(-t) V_{ext} \left\{ u_j(t)-U_0(t)(\g{\Phi}_v)_j\right\}
      \, dt, (\g{\Phi}_v)_j
     \right>_{L^2} 
      +
     \left<\int_{\R} U_0(-t) V_{ext} U_0(t) (\g{\Phi}_v)_j\, dt, (\g{\Phi}_v)_j
     \right>_{L^2} \notag \\ 
     &=
     \left<
      \int_{\R} [(\Omega_- -I)(U_0(t)\g{\Phi}_v)]_j\, dt, V_{ext}U_0(t)(\g{\Phi}_v)_j
     \right>_{L^2} \notag \\
     & \hspace{1em} +
      \frac{1}{|v|} 
      \left<
        \widetilde{V_{ext}}(\cdot ,\widehat{v}) U_0(\tau/v) \varphi_j, \,  
    U_0(\tau /v)\varphi_j
     \right>_{L^2} \notag \\
     & \le 
      \| [( \Omega_- - I)U_0(t)\g{\Phi}_v]_j\|_{L^2} 
      \left\| \int_{\R}V_{ext}U_0(t)(\g{\Phi}_v)_j\, dt 
      \right\|_{L^2} \notag \\
     & \hspace{1em} + 
     \frac{1}{|v|} 
     \left< 
      \widetilde{V_{ext}} (\cdot , \widehat{v}) 
       U_0(\tau /v)\varphi_j, \,  
       U_0(\tau/v)\varphi_j
     \right>_{L^2}. \label{eqn:3-1}
\end{align}
Thanks to Lemma \ref{lem:Enss-Weder} and Lemma \ref{lem:2-4}, 
the first term in \eqref{eqn:3-1} is estimated as $O(|v|^{-2})$ 
for $|v|$ sufficiently large. We know (see \cite{Enss-Weder1995}) that the second term in \eqref{eqn:3-1} is equal to 
\begin{equation*}
    \frac{1}{|v|} \left<
    \widetilde{V_{ext}} (\cdot , \widehat{v}) \varphi_j, \, \varphi_j
    \right>_{L^2} + O(|v|^{-2})
\end{equation*}
for $|v|$ sufficiently large. 
The proof is completed. 
\end{proof}

\section{Reconstructions}
\label{sec:4}
We complete the proof of Theorem \ref{thm:1-1} and 
give reconstruction formulas.

\subsection{Reconstruction of the interaction 
potential}
Let $\Gamma \subset \R$ be a the compact set and 
$\g{\Phi}_{v}(x, \lambda)= e^{iv\cdot x}\g{\varphi}((\lambda+1)x)$. 
Put 
\begin{equation*}
    S_{j}^{lim}(\lambda) = \lim_{|v|\to \infty} i 
    \left< 
     ((S-I)\g{\Phi}_v(\cdot, \lambda))_j, 
     (\g{\Phi}_v)_j(\cdot, \lambda)
    \right>_{L^2}, \qquad j=1,2,\cdots, N.
\end{equation*}
In view of Theorem \ref{thm:3-1}, we have 
\begin{equation}\label{eqn:4-1}
    S^{lim}_{j}(\lambda)=\int_{\R^n} \widehat{V_{int}}(\xi) 
    H_j(\xi, \lambda)\, d\xi
\end{equation}
for any $\varphi_j\in \Cal{S}_0$. 
Due to the fact (see \cite[Theorem 1.12]{Watanabe2019}) that the equation \eqref{eqn:4-1} is an integral equation of the first kind with a compact operator from $H^k(\R^n)$ to $L^2(\Gamma)$ for  $k>n/2$, we can reconstruct $\widehat{V_{int}}$ from the scattering  operator by using the theory of integral equations (see e.g., \cite[Section 15.4]{Kress}) or approximate techniques (see e.g., \cite[Section 8.3]{Morse-Feshbach}). 
For example, the Picard's theorem allows us to obtain a reconstruction formula of $\widehat{V_{int}}$.  

\begin{defn}
Let $X$ and $Y$ be  Hilbert space, $\Cal{A}\, :\, X\to Y$ be a compact linear operator, and 
$\Cal{A}^* \, : \, Y \to X$ be its adjoint.  
Singular values of $\Cal{A}$ is 
the non-negative square roots of the eigenvalue of non-negative self-adjoint compact 
operator $\Cal{A}^* \Cal{A} \, : \, X \to X$. 
The singular system of $\Cal{A}$ is 
the system $\{ \mu_n, \varphi_n, g_n\}$, $n\in \mathbb{N}$,
where 
 $\varphi_n \in X$ and $g_n\in Y$ are orthonormal sequences such that 
$\Cal{A}\phi_n = \mu_n g_n$ and $\Cal{A}^*g_n = \mu_n \phi_n$ for all 
$n\in \mathbb{N}$. 

\end{defn}

We denote the null-space of the operator $T$ by $\Cal{N}(T)$. 
 \begin{thm}
  Let $n=3$. 
  Assume that potentials $V_{int}$ and $V_{ext}$ satisfy Assumption \ref{ass:interaction-2} and Assumption \ref{ass:external}, respectively.     
Then for any  $\varphi_j \in  \Cal{S}_0$, $j=1,2, \cdots, N$ the function 
$S^{lim}_j (\lambda)$
  is the $L^2$-function on a compact set $\Gamma \subset \R$. 
  Moreover, 
  letting $\{ \mu_n, \phi_n, g_n\}$, $n\in \mathbb{N}$ be a singular system of 
  the integral operator $T$: 
  \begin{equation*}
      (Tf)(\lambda):= \int_{\R^n} f(\xi) H_j(\xi, \lambda)\, d\xi,
  \end{equation*}
  the Fourier transform of the interaction potential is reconstructed by the formula:      
    \[
       \widehat{V_{int}}(\xi)= \sum_{n=1}^{\infty} \frac{1}{\mu_n} 
       \left< S^{lim}_j , g_n \right>_{L^2(\Gamma)} \phi_n
    \]
   if and only if  $S^{lim}_j\in \Cal{N}(T^*)^{\perp}$ and satisfies
      \[
       \sum_{n=1}^{\infty} \frac{1}{\mu_n^2} \left|
         \left<S^{lim}_j , g_n \right>_{L^2(\Gamma)} \right|^2 <\infty.
    \]
 \end{thm}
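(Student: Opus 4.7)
The plan is to reduce the statement to a direct application of Picard's theorem for compact operators between Hilbert spaces, once three preliminary points are checked: membership of $S_j^{lim}$ in $L^2(\Gamma)$, compactness of the integral operator $T$, and the identification of the equation to be inverted.

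First I would verify that $S_j^{lim}(\lambda)$ is a well-defined $L^2$-function on the compact set $\Gamma$. By Theorem \ref{thm:3-1} applied to the amplitude $\g{\varphi}((\lambda+1)x)$, the high-velocity limit exists for every $\lambda$ and is given by
\[
S_j^{lim}(\lambda) = \int_{\R^n} \widehat{V_{int}}(\xi)\, H_j(\xi,\lambda)\, d\xi.
\]
Since $\varphi_j \in \Cal{S}_0$ has compactly supported Fourier transform and $V_{int}$ satisfies Assumption \ref{ass:interaction-2}, the kernel $H_j(\xi,\lambda)$---built from free Schr\"odinger evolutions of products of $\varphi_k$, $\varphi_j$---is jointly smooth in $(\xi,\lambda)$ and rapidly decaying in $\xi$. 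Combined with the compactness of $\Gamma$, this gives the $L^2(\Gamma)$ bound and also joint continuity in $\lambda$.

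Second, I would invoke the cited Theorem 1.12 in \cite{Watanabe2019}, which states that the integral operator
\[
(Tf)(\lambda) = \int_{\R^n} f(\xi)\, H_j(\xi,\lambda)\, d\xi
\]
is a compact linear operator from $H^k(\R^n)$ into $L^2(\Gamma)$ for any $k > n/2$. Compactness between Hilbert spaces guarantees the existence of the singular system $\{\mu_n, \phi_n, g_n\}$ referred to in the statement, with $\mu_n \downarrow 0$ and $\phi_n$, $g_n$ orthonormal sequences in $H^k(\R^n)$ and $L^2(\Gamma)$ respectively, satisfying $T\phi_n = \mu_n g_n$ and $T^* g_n = \mu_n \phi_n$.

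Third, the conclusion follows from Picard's theorem for compact operators with (generally) non-closed range: the equation $Tf = g$ is solvable if and only if $g \in \Cal{N}(T^*)^{\perp}$ and $\sum_{n} \mu_n^{-2} |\langle g, g_n\rangle_{L^2(\Gamma)}|^2 < \infty$, in which case the minimum-norm solution is $f = \sum_{n} \mu_n^{-1} \langle g, g_n\rangle_{L^2(\Gamma)} \phi_n$. Applying this with $f = \widehat{V_{int}}$ and $g = S_j^{lim}$ yields the claimed formula. The main obstacle in this proof is not the application of Picard's theorem itself but the verification of the compactness of $T$ between the relevant Sobolev and $L^2$ spaces; however, this is precisely what is supplied by the cited result, so the remaining work reduces to the routine $L^2(\Gamma)$ estimate on $S_j^{lim}$ obtained directly from Theorem \ref{thm:3-1} together with Assumption \ref{ass:interaction-2}.
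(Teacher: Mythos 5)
Your proposal is correct and follows essentially the same route as the paper: use Theorem \ref{thm:3-1} (with the scaled states $\g{\varphi}((\lambda+1)x)$) to identify $S^{lim}_j(\lambda)=\int_{\R^n}\widehat{V_{int}}(\xi)H_j(\xi,\lambda)\,d\xi$ as an $L^2(\Gamma)$ function, invoke the cited compactness of $T:H^k(\R^n)\to L^2(\Gamma)$ for $k>n/2$ from \cite[Theorem 1.12]{Watanabe2019} to obtain the singular system, and conclude by Picard's theorem for first-kind equations with compact operators. No substantive difference from the paper's argument.
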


\begin{rem}
 The procedure of constructing the singular system is as follows: 
 Due to the fact that the operator  $T^*T$ is a self-adjoint compact operator on $H^k(\R^n)$ for $k >n/2$, the operator $T^*T$ has at least one eigenvalues different from zero and at most a countable set of eigenvalues accumulating only at zero. 
 Let $(\phi_n)$ denote an orthonormal sequences such that 
 $T^*T\phi_n = \mu_n^2 \phi_n$. Then we can define  $g_n$ as 
 $g_n = \mu_n^{-1} T^{*}\phi_n$. 
\end{rem}

Uniqueness of identifying $\widehat{V_{int}}$ follows from \cite[Theorem 1.14]{Watanabe2019}. Thus we conclude that one can uniquely determine $\widehat{V_{int}}$ from $S$.

\subsection{Reconstruction of the external 
potential}
Assume that potentials $V_{int}$ and $V_{ext}$ satisfy Assumption \ref{ass:interaction-2} and Assumption \ref{ass:external}. 
In addition, suppose that 
\begin{equation}\label{eqn:4-2-1}
    \int_{K}^{\infty} (1+R) \| V_{ext}(x)F(|x|\ge R)\|_{L^{\infty}(\R^n)} \, dR <\infty, \qquad K>0, 
\end{equation}
where $F(A)$ is the characteristic function of $A\subset \R^n$.
We note that the Assumption \ref{ass:external-2} includes the 
condition \eqref{eqn:4-2-1}. Hence, the following also holds 
under the  Assumption \ref{ass:interaction-2} and 
Assumption \ref{ass:external-2}.
In view of Theorem \ref{thm:3-1} with the help of 
Takiguchi \cite[Proposition 3.2.]{Takiguchi1998}, 
it is easy to verify that  
\begin{equation*}
    \lim_{|v|\to \infty} |v| \left< i
    \left( e^{-iv\cdot x}(S-I)\g{\Phi}_{v}\right)_j - 
  \int_{\R} U_0(-t) N_{j}(x, U_0(t)\g{\varphi})\, dt, \psi 
    \right>_{L^2}
    =\left< 
     \widetilde{V_{ext}}(\cdot, \widehat{v}) \varphi_j, \psi
    \right>_{L^2}
\end{equation*}
for any $\varphi_j \in \Cal{S}_0$ and 
for any $\psi \in \Cal{S}$.  
From Theorem \ref{thm:4-2}, $\widehat{V_{int}}(\xi)$ is the known function. 
Therefore, $N_{j}(x, U_0(t)\g{\varphi})$ is known function. 
Hence, the above identity shows that one can determine the $X$-ray transform of $V_{ext}$ from $S$ in the sense of the tempered distribution $\Cal{S}'$. 
By using the inversion formula for the $X$-ray transform, we obtain a reconstruction formula of $V_{ext}$. 
More precisely, define the operator $I^a$, which is called the Riesz potential, 
as 
\[ I^{a}f := \Cal{F}^{-1} ( |\xi|^{-a}\widehat{f}(\xi)), \qquad a<n.
\]
We denote a hyperplane passing  through the origin and orthogonal to 
$\alpha \in \bs^{n-1}$ by $\alpha^{\perp}$. 
Let $I^a_{\alpha^{\perp}}$ be the $(n-1)$-dimensional Riesz potential acting 
on the hyperplane $\alpha^{\perp}$. 
The adjoint of the X-ray transform is denoted as $X^*$: 
\[ (X^{*}g)(x)=\int_{\bs^{n-1}}g(\theta, x-(\theta\cdot x)\theta)\, d\sigma, 
\]
where $d\sigma$ is the Lebesgue measure on the unit sphere $\bs^{n-1}$ in 
$\R^n$. 
We know (see, e.g., Ramm-Katsevich \cite[Theorem 2.6.2.]{Ramm-Katsevich1996}) 
that 
letting $f\in \Cal{S}(\R^n)$ and $g=Xf$, one has for any $|a|<n$ 
\begin{equation*}
    f=\frac{1}{2\pi |\bs^{n-2}|}I^{-a}X^* I_{\alpha^{\perp}}^{a-1}g,
\end{equation*}
where $|\bs^{n-2}|$ is the surface area of $\bs^{n-2}$.
This inversion formula also holds for the tempered distribution 
$f\in \Cal{S}'(\R^n)$ and $g=Xf \in \Cal{S}'(T)$, where 
$T=\alpha^{\perp} \times \bs^{n-1}$ (see Takiguchi \cite{Takiguchi1998}). 
Thus, we obtain 
\begin{thm}\label{thm:4-2}
Let $ n =3 $ and $|a|<n $. 
Assume that potentials $V_{int}$ and $V_{ext}$ satisfy Assumption \ref{ass:interaction-2} and Assumption \ref{ass:external} with 
\eqref{eqn:4-2-1}, respectively. 
Then for any $\varphi_j \in \Cal{S}_0$, we have 
\[ V_{ext}=\frac{I^{-a} X^* I^{a-1}_{\alpha^{\perp}}}{2\pi |\bs^{n-2}|}
\frac{1}{\varphi_j}
 \lim_{|v|\to \infty}|v|
 \left\{ i
 \left( e^{-iv\cdot x}(S-I)\g{\Phi}_{v}\right)_j - 
  \int_{\R} U_0(-t) N_{j}(x, U_0(t)\g{\varphi})\, dt
  \right\}
\]
in $\Cal{S}'$.  
\end{thm}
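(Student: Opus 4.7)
The plan is to read Theorem 4.2 as the combination of three ingredients: (a) a distributional identification of the limit on the right-hand side with the $X$-ray transform of $V_{ext}$ multiplied by $\varphi_j$, (b) the fact that the $N_j$ term appearing in the formula is already determined by $S$ because Theorem 4.1 has reconstructed $\widehat{V_{int}}$ (and hence $Q_H, Q_F$ and $N_j$), and (c) the $X$-ray inversion formula cited from Ramm-Katsevich. Most of the analytic work is hidden in (a), so I would spend the proof on that.

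For step (a), I would upgrade the proof of Theorem 3.1 from the scalar pairing $\langle \cdot, (\Phi_v)_j\rangle$ to a pairing against an arbitrary test function $\psi \in \mathcal{S}$. Mimicking the decomposition $I_j = I_j^{(0)} + I_j^{(1)}$, I would write
\[
\Big\langle i\big(e^{-iv\cdot x}(S-I)\g{\Phi}_v\big)_j - \int_{\R} U_0(-t) N_j(x, U_0(t)\g{\varphi})\,dt,\ \psi \Big\rangle_{L^2}
\]
and split it into the Hartree-Fock piece (involving $N_j$ evaluated on the interacting solution $\g{u}$) plus the external-potential piece. For the Hartree-Fock piece, the argument of \cite{Watanabe2019} combined with Lemma \ref{lem:2-4} replaces $\g{u}$ by $U_0(t)\g{\varphi}$ up to an error of order $|v|^{-2}$, which is exactly what is being subtracted off by the $\int U_0(-t)N_j\,dt$ term. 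For the external-potential piece I would insert the wave operator $\Omega_-$ as in \eqref{eqn:3-1}, bound the $\Omega_- - I$ contribution by $O(|v|^{-2})$ using Lemma \ref{lem:Enss-Weder} and Lemma \ref{lem:2-4}, and identify the remaining free-evolution integral with $|v|^{-1}\langle \widetilde{V_{ext}}(\cdot,\widehat v)\varphi_j,\psi\rangle_{L^2}$ via the standard Enss-Weder change of variables $\tau = |v|t$.

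This yields, as $|v|\to\infty$,
\[
|v|\,\Big\{i\big(e^{-iv\cdot x}(S-I)\g{\Phi}_v\big)_j(x) - \int_{\R} U_0(-t) N_j(x, U_0(t)\g{\varphi})\,dt\Big\} \longrightarrow \widetilde{V_{ext}}(x,\widehat v)\,\varphi_j(x)
\]
in $\mathcal{S}'(\R^n)$. Because $\varphi_j \in \mathcal{S}_0$ is chosen freely (one may take it non-vanishing on any prescribed set, or vary it over a family), division by $\varphi_j$ recovers $\widetilde{V_{ext}}(x,\widehat v)$ as a tempered distribution on $\R^n \times \mathbb{S}^{n-1}$; here the decay condition \eqref{eqn:4-2-1} guarantees that $\widetilde{V_{ext}}$ actually belongs to $\mathcal{S}'$ of the joint variables, per Takiguchi. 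Finally, I apply the inversion formula $f = (2\pi |\bs^{n-2}|)^{-1} I^{-a} X^* I^{a-1}_{\alpha^\perp}(Xf)$ of Ramm-Katsevich, valid in $\mathcal{S}'$ for $|a|<n$, to obtain the stated formula for $V_{ext}$.

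The main obstacle I foresee is the careful justification of the distributional division by $\varphi_j$ together with the legitimacy of exchanging the limit in $|v|$ with the $\psi$-pairing; the asymptotic control must be uniform enough on the test-function side that $\widetilde{V_{ext}}(\cdot,\widehat v)\varphi_j$ really is obtained as a bona fide element of $\mathcal{S}'$. The rest is essentially a repackaging of Theorem 3.1's proof with $(\Phi_v)_j$ replaced by a general Schwartz function, together with the observation that the subtracted $N_j$-term is explicitly computable once $\widehat{V_{int}}$ has been identified in Section 4.1.
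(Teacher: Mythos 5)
Your proposal follows essentially the same route as the paper: identify the weak limit by rerunning the Theorem \ref{thm:3-1} decomposition against an arbitrary $\psi\in\Cal{S}$ (with Lemma \ref{lem:Enss-Weder}, Lemma \ref{lem:2-4} and Takiguchi's result), observe that $N_j(x,U_0(t)\g{\varphi})$ is known once $\widehat{V_{int}}$ has been reconstructed, and then divide by $\varphi_j$ and apply the Ramm--Katsevich $X$-ray inversion formula in $\Cal{S}'$ under condition \eqref{eqn:4-2-1}. This matches the paper's argument, and the division-by-$\varphi_j$ subtlety you flag is treated no more carefully in the paper itself.
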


\section*{Acknowledgement} 
This work was supported by JSPS KAKENHI Grant Number 19K03617. 

\bibliographystyle{unsrt}  
\bibliography{michiyukirefs2}  

\begin{thebibliography}{10}

\bibitem{Enss-Weder1995}
Volker Enss and Ricardo Weder.
\newblock The geometrical approach to multidimensional inverse scattering.
\newblock {\em J. Math. Phys.}, 36(8):3902--3921, 1995.

\bibitem{Novikov}
Roman~G. Novikov.
\newblock On inverse scattering for the {$N$}-body {S}chr\"{o}dinger equation.
\newblock {\em J. Funct. Anal.}, 159(2):492--536, 1998.

\bibitem{Wang}
X.~P. Wang.
\newblock On the uniqueness of inverse scattering for {$N$}-body systems.
\newblock {\em Inverse Problems}, 10(3):765--784, 1994.

\bibitem{Wang1996}
Xue~Ping Wang.
\newblock High energy asymptotics for n-body scattering matrices with arbitrary
  channels.
\newblock {\em Annales de l'IHP Physique th{\'e}orique}, 65(1):81--108, 1996.

\bibitem{Vasy}
Andr\'{a}s Vasy.
\newblock Structure of the resolvent for three-body potentials.
\newblock {\em Duke Math. J.}, 90(2):379--434, 1997.

\bibitem{Uhlmann-Vasy}
Gunther Uhlmann and Andr\'{a}s Vasy.
\newblock Low-energy inverse problems in three-body scattering.
\newblock {\em Inverse Problems}, 18(3):719--736, 2002.

\bibitem{Uhlmann-Vasy2003}
Gunther Uhlmann and Andr\'{a}s Vasy.
\newblock Inverse problems in three-body scattering.
\newblock In {\em Inverse problems: theory and applications ({C}ortona/{P}isa,
  2002)}, volume 333 of {\em Contemp. Math.}, pages 209--215. Amer. Math. Soc.,
  Providence, RI, 2003.

\bibitem{Uhlmann-Vasy2004}
Gunther Uhlmann and Andr\'{a}s Vasy.
\newblock Inverse problems in {$N$}-body scattering.
\newblock In {\em Inverse problems and spectral theory}, volume 348 of {\em
  Contemp. Math.}, pages 135--154. Amer. Math. Soc., Providence, RI, 2004.

\bibitem{Valencia-Weder2012}
Gerardo~Daniel Valencia and Ricardo Weder.
\newblock High-velocity estimates and inverse scattering for quantum {$N$}-body
  systems with {S}tark effect.
\newblock {\em J. Math. Phys.}, 53(10):102105, 30, 2012.

\bibitem{Lemm-Uhlig2000}
Joerg Lemm and J~Uhlig.
\newblock Hartree-fock approximation for inverse many-body problems.
\newblock {\em Physical review letters}, 84:4517--20, 06 2000.

\bibitem{Yajima1995-2}
Kenji Yajima.
\newblock The {$W^{k,p}$}-continuity of wave operators for {S}chr\"{o}dinger
  operators. {III}. {E}ven-dimensional cases {$m\geq 4$}.
\newblock {\em J. Math. Sci. Univ. Tokyo}, 2(2):311--346, 1995.

\bibitem{Weder1997-2}
Ricardo Weder.
\newblock Inverse scattering for {$N$}-body systems with time-dependent
  potentials.
\newblock In {\em Inverse problems of wave propagation and diffraction
  ({A}ix-les-{B}ains, 1996)}, volume 486 of {\em Lecture Notes in Phys.}, pages
  27--46. Springer, Berlin, 1997.

\bibitem{Adachi-Maehara2007}
Tadayoshi Adachi and Katsuhiro Maehara.
\newblock On multidimensional inverse scattering for {S}tark {H}amiltonians.
\newblock {\em J. Math. Phys.}, 48(4):042101, 12, 2007.

\bibitem{Adachi-Fujiwara-Ishida2013}
Tadayoshi Adachi, Yuko Fujiwara, and Atsuhide Ishida.
\newblock On multidimensional inverse scattering in time-dependent electric
  fields.
\newblock {\em Inverse Problems}, 29(8):085012, 24, 2013.

\bibitem{Adachi.et2011}
Tadayoshi Adachi, Tatsuya Kamada, Masayuki Kazuno, and Keisuke Toratani.
\newblock On multidimensional inverse scattering in an external electric field
  asymptotically zero in time.
\newblock {\em Inverse Problems}, 27(6):065006, 17, 2011.

\bibitem{Ishida2019}
Atsuhide Ishida.
\newblock Inverse scattering in the {S}tark effect.
\newblock {\em Inverse Problems}, 35(10):105010, 20, 2019.

\bibitem{Watanabe2018}
Michiyuki Watanabe.
\newblock Time-dependent method for non-linear {S}chr\"{o}dinger equations in
  inverse scattering problems.
\newblock {\em J. Math. Anal. Appl.}, 459(2):932--944, 2018.

\bibitem{Watanabe2019}
Michiyuki Watanabe.
\newblock Time-dependent methods in inverse scattering problems for the
  {H}artree-{F}ock equation.
\newblock {\em J. Math. Phys.}, 60(9):091504, 19, 2019.

\bibitem{Weder1997}
Ricardo Weder.
\newblock Inverse scattering for the nonlinear {S}chr\"{o}dinger equation.
\newblock {\em Comm. Partial Differential Equations}, 22(11-12):2089--2103,
  1997.

\bibitem{Weder1999}
Ricardo Weder.
\newblock The {$W_{k,p}$}-continuity of the {S}chr\"{o}dinger wave operators on
  the line.
\newblock {\em Comm. Math. Phys.}, 208(2):507--520, 1999.

\bibitem{Weder1999_2}
Ricardo Weder.
\newblock Multidimensional inverse problems in perturbed stratified media.
\newblock {\em J. Differential Equations}, 152(1):191--239, 1999.

\bibitem{Weder2000_1}
Ricardo Weder.
\newblock {$L^p$}-{$L^{\dot p}$} estimates for the {S}chr\"{o}dinger equation
  on the line and inverse scattering for the nonlinear {S}chr\"{o}dinger
  equation with a potential.
\newblock {\em J. Funct. Anal.}, 170(1):37--68, 2000.

\bibitem{Weder2000_2}
Ricardo Weder.
\newblock Inverse scattering on the line for the nonlinear {K}lein-{G}ordon
  equation with a potential.
\newblock {\em J. Math. Anal. Appl.}, 252(1):102--123, 2000.

\bibitem{Weder2000_3}
Ricardo Weder.
\newblock Uniqueness of inverse scattering for the nonlinear {S}chr\"{o}dinger
  equation and reconstruction of the potential and the nonlinearity.
\newblock In {\em Mathematical and numerical aspects of wave propagation
  ({S}antiago de {C}ompostela, 2000)}, pages 631--634. SIAM, Philadelphia, PA,
  2000.

\bibitem{Weder2001_1}
Ricardo Weder.
\newblock Inverse scattering for the nonlinear {S}chr\"{o}dinger equation.
  {II}. {R}econstruction of the potential and the nonlinearity in the
  multidimensional case.
\newblock {\em Proc. Amer. Math. Soc.}, 129(12):3637--3645, 2001.

\bibitem{Weder2001_2}
Ricardo Weder.
\newblock Inverse scattering for the non-linear {S}chr\"{o}dinger equation:
  reconstruction of the potential and the non-linearity.
\newblock {\em Math. Methods Appl. Sci.}, 24(4):245--254, 2001.

\bibitem{Weder2002}
Ricardo Weder.
\newblock Multidimensional inverse scattering for the nonlinear
  {K}lein-{G}ordon equation with a potential.
\newblock {\em J. Differential Equations}, 184(1):62--77, 2002.

\bibitem{Weder2003}
Ricardo Weder.
\newblock The {$L^p$}-{$L^{p'}$} estimate for the {S}chr\"{o}dinger equation on
  the half-line.
\newblock {\em J. Math. Anal. Appl.}, 281(1):233--243, 2003.

\bibitem{Watanabe0}
Michiyuki Watanabe.
\newblock Inverse scattering for the nonlinear {S}chr\"{o}dinger equation with
  cubic convolution nonlinearity.
\newblock {\em Tokyo J. Math.}, 24(1):59--67, 2001.

\bibitem{Watanabe2007-1}
Michiyuki Watanabe.
\newblock A remark on inverse scattering for time dependent hartree equations.
\newblock In {\em Journal of Physics: Conference Series}, volume~73, page
  012025, 2007.

\bibitem{Watanabe2007}
Michiyuki Watanabe.
\newblock Inverse scattering problem for time dependent {H}artree-{F}ock
  equations in the three-body case.
\newblock {\em J. Math. Phys.}, 48(5):053510, 9, 2007.

\bibitem{Wada}
Takeshi Wada.
\newblock Scattering theory for time-dependent {H}artree-{F}ock type equation.
\newblock {\em Osaka J. Math.}, 36(4):905--918, 1999.

\bibitem{Hayashi-Naumkin98}
Nakao Hayashi, P.~I. Naumkin, Nakao Hayashi, and Pavel~I. Naumkin.
\newblock Asymptotics for large time of solutions to the nonlinear
  schr\"{o}dinger and hartree equations.
\newblock {\em Amer. J. Math}, 1998.

\bibitem{Hayashi-Ozawa1988}
Nakao Hayashi and Tohru Ozawa.
\newblock Scattering theory in the weighted $l^{2} (\mathbb{R}^n)$ spaces for
  some schr\"{o}dinger equations.
\newblock {\em Annales de l'IHP Physique th\'{e}orique}, 48(1):17--37, 1988.

\bibitem{Friedman1969}
A.~Friedman.
\newblock {\em Partial Differential Equations}.
\newblock Holt, Rinehart and Winston, 1969.

\bibitem{Mochizuki}
Kiyoshi Mochizuki.
\newblock On small data scattering with cubic convolution nonlinearity.
\newblock {\em J. Math. Soc. Japan}, 41(1):143--160, 1989.

\bibitem{Strauss1974}
Walter Strauss.
\newblock Nonlinear scattering theory.
\newblock In LaVita~J. A. and Marchand J.-P., editors, {\em Scattering Theory
  in Mathematical Physics}, volume~9 of {\em Nato Advanced Study Institutes
  Series C}, pages 53--78. D. Reidel, 1974.

\bibitem{Isozaki1983}
Hiroshi Isozaki.
\newblock On the existence of solutions of time-dependent {H}artree-{F}ock
  equations.
\newblock {\em Publ. Res. Inst. Math. Sci.}, 19(1):107--115, 1983.

\bibitem{Cazenave2003}
Thierry Cazenave.
\newblock {\em Semilinear {S}chr\"{o}dinger equations}, volume~10 of {\em
  Courant Lecture Notes in Mathematics}.
\newblock New York University, Courant Institute of Mathematical Sciences, New
  York; American Mathematical Society, Providence, RI, 2003.

\bibitem{Kress}
Rainer Kress.
\newblock {\em Linear integral equations}, volume~82 of {\em Applied
  Mathematical Sciences}.
\newblock Springer-Verlag, New York, second edition, 1999.

\bibitem{Morse-Feshbach}
Philip~M. Morse and Herman Feshbach.
\newblock {\em Methods of theoretical physics. 2 volumes}.
\newblock McGraw-Hill Book Co., Inc., New York-Toronto-London, 1953.

\bibitem{Takiguchi1998}
Takashi Takiguchi.
\newblock An inverse problem for free channel scattering.
\newblock In {\em Inverse problems and related topics ({K}obe, 1998)}, volume
  419 of {\em Chapman \& Hall/CRC Res. Notes Math.}, pages 165--179. Chapman \&
  Hall/CRC, Boca Raton, FL, 2000.

\bibitem{Ramm-Katsevich1996}
A.~G. Ramm and A.~I. Katsevich.
\newblock {\em The {R}adon transform and local tomography}.
\newblock CRC Press, Boca Raton, FL, 1996.

\end{thebibliography}

\end{document}